\newtheorem{theorem}{Theorem}[section]
\newtheorem{proposition}[theorem]{Proposition}
\newtheorem{lemma}[theorem]{Lemma}
\newtheorem{definition}[theorem]{Definition}
\numberwithin{equation}{section}
\newenvironment{propositionlist}{\begin{compactenum}[\itshape i)]}{\end{compactenum}}
\newenvironment{definitionlist}{\begin{compactenum}[\itshape i)]}{\end{compactenum}}
\newcommand{\refitem}[1] {\textit{\ref{#1})}}
\newlength{\dinwidth}
\newlength{\dinmargin}
\renewcommand{\mathbb}[1]{\mathbbm{#1}}
\newcommand{\Cl}{\mathbbm{C}}
\newcommand{\Rl}{\mathbb{R}}
\newcommand{\Nl}{\mathbb{N}}
\definecolor{lightgray}{rgb}{0.8,0.8,0.8}
\newcommand{\Om}{\Omega}
\newcommand{\te}{\theta}
\newcommand{\la}{\lambda}
\newcommand{\La}{\Lambda}
\newcommand{\eps}{\varepsilon}
\newcommand{\A}{\mathcal{A}}
\newcommand{\B}{\mathcal{B}}
\newcommand{\I}{\mathcal{I}}
\newcommand{\K}{\mathcal{K}}
\newcommand{\Hil}{\mathcal{H}}
\newcommand{\VV}{\mathcal{V}}
\newcommand{\Q}{\mathcal{Q}}
\newcommand{\F}{\mathcal{F}}
\newcommand{\U}{\mathcal{U}}
\newcommand{\DD}{\mathcal{D}}
\newcommand{\W}{\mathcal{W}}
\newcommand{\SF}{\mathcal{S}}
\newcommand{\Ss}{\mathscr{S}}   
\newcommand{\fti}{\tilde{f}}
\newcommand{\hti}{\tilde{h}}
\newcommand{\Wti}{\tilde{W}}
\newcommand{\fhat}{\hat{f}}
\newcommand{\hhat}{\hat{h}}
\newcommand{\PGp}{\mathcal{P}_+}   
\newcommand{\PGpo}{\mathcal{P}_+^\uparrow}   
\newcommand{\PG}{\mathcal{P}}
\newcommand{\frS}{\textfrak{S}}
\newcommand{\balpha}{{\boldsymbol{\alpha}}}
\newcommand{\bbeta}{{\boldsymbol{\beta}}}
\newcommand{\btau}{{\boldsymbol{\tau}}}
\def\bte{{\boldsymbol{\theta}}}
\newcommand{\OO}{O}
\newcommand{\supp}{\text{supp}\,}
\newcommand{\sh}{\mathrm{sh}}
\newcommand{\ch}{\mathrm{ch}}
\newcommand{\Strip}{\mathrm{S}}
\newcommand{\ot}{\otimes}
\newcommand{\zd}{z^{\dagger}}
\newcommand{\ad}{a^{\dagger}}
\title[Operator-algebraic construction of integrable gauge theories]{Towards an operator-algebraic construction of integrable global gauge theories}
\author{Gandalf Lechner}
\address{Department of Physics, Vienna University, 1090 Vienna, Austria (Present address: Institute for Theoretical Physics, Leipzig University, 04103 Leipzig, Germany)}
\email{gandalf.lechner@uni-leipzig.de}
\author{Christian Sch\"utzenhofer}
\address{Department of Physics, Vienna University, 1090 Vienna, Austria}
\email{schuetzenhofer@gmx.at}
\date{August 11, 2012}
\begin{document}
\maketitle

\begin{abstract}
	The recent construction of integrable quantum field theories on two-dimensional Minkowski space by operator-algebraic methods is extended to models with a richer particle spectrum, including finitely many massive particle species transforming under a global gauge group. Starting from a two-particle S-matrix satisfying the usual requirements (unitarity, Yang-Baxter equation, Poincar\'e and gauge invariance, crossing symmetry, ...), a pair of relatively wedge-local quantum fields is constructed which determines the field net of the model. Although the verification of the modular nuclearity condition as a criterion for the existence of local fields is not carried out in this paper, arguments are presented that suggest it holds in typical examples such as nonlinear $O(N)$ $\sigma$-models. It is also shown that for all models complying with this condition, the presented construction solves the inverse scattering problem by recovering the S-matrix from the model via Haag-Ruelle scattering theory, and a proof of
asymptotic completeness is given.
\end{abstract}

\section{Introdcution}

Completely integrable quantum field theories on two-dimensional Minkowski space have attracted the interest of physicists and mathematicians for a long time. On the one hand, such models are interesting in their own right, as they provide examples of non-trivial quantum field theories which are simple enough to be accessible to thorough analysis from many different points of view. On the other hand, some of these models resemble certain aspects of much more complicated systems of direct physical relevance. In particular, non-linear $\sigma$-models in two dimensions are believed to exhibit certain features of non-Abelian gauge theories in four dimensions, such as asymptotic freedom (see for example  \cite{AbdallaAbdallaRothe:1991}).

In comparison to quantum field theories in higher dimensions, the dynamics of integrable models are severely restricted by an infinity of conversation laws, which for example exclude particle production in scattering processes of any energy. Despite these simplifying features, a rigorous construction of integrable quantum field theories beyond perturbation theory is often a difficult task. In some cases, such as the Sine-Gordon model, a construction by the Euclidean methods of constructive quantum field theory is possible \cite{Frhlich:1975}, whereas in other cases, such as the $O(N)$ $\sigma$-models, the current state of the art is a construction by perturbative renormalization, with the usual problems of controlling the perturbation series (see \cite[Ch.~7]{AbdallaAbdallaRothe:1991} and the references cited there).

In fact, the Lagrangians of these models are more complicated then those of models with polynomial self-interaction, which are under complete control in two dimensions \cite{GlimmJaffe:1987}. On the other hand, the S-matrix of integrable quantum field theories has a very simple structure: No particle production occurs, processes with $n$ incoming and $n$ outgoing particles are described by products of two-particle S-matrices, and furthermore, the elastic two-particle S-matrix is constrained by the Yang-Baxter-relation and other conditions \cite{AbdallaAbdallaRothe:1991}. Thus for integrable models, the two-particle S-matrix instead of the Lagrangian seems to be a much more convenient object for describing the interaction. In particular, formulating the problem of constructing integrable quantum field theories as an inverse scattering problem starting from a given two-particle S-matrix sidesteps all problems related to quantization and renormalization.

This inverse scattering point of view lies at the heart of two different approaches to the construction of integrable models. In the form-factor program \cite{BergKarowskiWeisz:1979,Smirnov:1992,BabujianFoersterKarowski:2006}, the aim is to calculate $n$-point functions of local field operators, thus constructing the models in the Wightman framework \cite{StreaterWightman:1964} of quantum field theory. The basic object of interest here are the form factors, matrix elements of local field operators in scattering states, which for many models can be explicitly computed from the S-matrix and analyticity assumptions, see for example \cite{BabujianFoersterKarowski:2012} for recent results containing $O(N)$-symmetric models . The $n$-point functions are then given as infinite series of integrals over form factors. Although the convergence of this series is expected to be much better than the usual perturbation series, it is presently under control only in a few special cases \cite{BabujianFoersterKarowski:2006}.

The second, and much more recent, inverse scattering approach to integrable models makes use of the operator-algebraic framework of quantum field theory \cite{Haag:1996}. This program was initiated by Schroer's insight \cite{Schroer:1999} that the crossing symmetry of the S-matrix, mathematically similar to the KMS property for the vacuum state on an algebra of observables localized in the Rindler wedge $W_R:=\{(x_0,x_1)\in\Rl^2\,:\,x_1\geq|x_0|\}$ with respect to the Lorentz boost dynamics, allows for an explicit construction of quantum fields which are localized in $W_R$. These fields are important auxiliary objects in the construction, called polarization-free generators \cite{SchroerWiesbrock:2000-1} because of their simple momentum-space properties, see also \cite{BorchersBuchholzSchroer:2001, Mund:2010} for a model-independent analysis of this concept. For the case of a particle spectrum consisting of just one species of neutral, massive particles, a complete construction of these fields was carried
out in \cite{Lechner:2003}. One can then pass to the von Neumann algebras they generate and efficiently characterize all local field operators present in the model at hand by operator-algebraic techniques \cite{BuchholzLechner:2004}. Bypassing all problems related to the explicit construction of point-local field operators, existence of local fields can be proven with the help of the modular nuclearity condition of Buchholz, D'Antoni and Longo \cite{BuchholzDAntoniLongo:1990-1}.

Along these lines, an infinite family of integrable models (containing a single species of massive particles) were constructed, and it was shown that the construction yields quantum field theories which are asymptotically complete and solve the inverse scattering problem, i.e. the initially considered factorizing S-matrix can be recovered in scattering theory \cite{Lechner:2008}. Their short distance limits have been studied in \cite{BostelmannLechnerMorsella:2011}, and generalizations to higher dimensions in the context of deformation procedures can be found in \cite{Lechner:2011,Alazzawi:2012}.

It is the aim of the present article to generalize this construction to theories with a richer particle spectrum, containing an arbitrary number of massive particle species, which can also carry arbitrary charges and transform under some global gauge group. This more general class contains in particular the $O(N)$ $\sigma$-models. Whereas many of the basic ideas underlying this construction are the same as in the scalar case, the appearance of many particle species and a gauge group changes the structure of the S-matrix, and the construction has to be reconsidered. In this article, we will proceed as follows: In Section~2, we specify our precise assumptions on the single particle spectrum and the two-particle S-matrix. We then construct a convenient vacuum Hilbert space from these data. This Hilbert space carries a representation of two different versions of the Zamolodchikov-Faddeev algebra \cite{ZamolodchikovZamolodchikov:1979, Faddeev:1984}, and their relative exchange relations are calculated. These
Zamolodchikov creation/annihilation operators are then combined to a pair of quantum fields $\phi,\phi'$
in Section~3. The transition to the operator-algebraic setting is discussed in Section~4, and we also review the strategy for proving existence of local field operators there. The complete investigation of the modular nuclearity condition goes beyond the scope of the article, but we outline the basic strategy and also argue why this condition is likely to hold in the case of the $O(N)$ $\sigma$-models. In Section~5, we show that whenever a model complies with the modular nuclearity condition, our construction solves the inverse scattering problem and yields an asymptotically complete theory. Finally, in Section~6 we give a few explicit examples of $S$-matrices fitting into our framework, in particular the $O(N)$ $\sigma$-models.

This article is partly based on the diploma thesis of the second named author \cite{Schutzenhofer:2011}.

\section{Two-particle S-matrices for general particle spectra and $S$-symmetric Fock spaces}\label{section:S-Matrices}

The construction of the models we are interested in begins with the specification of their single particle mass and charge spectra. We thus consider a compact Lie group $G$ as the global gauge group, and identify charges with equivalence classes $q$ of unitary irreducible representations of $G$ as usual.

As charges carried by single particles, we consider a set $\Q$ of finitely many charges, and to account for antiparticles, we assume that with each class $q\in\Q$, also the conjugate class $\overline{q}$ is contained in $\Q$. We are interested in constructing massive stable quantum field theories and must therefore guarantee that in each sector, the masses are positive isolated eigenvalues of the mass operator. This will in particular be the case when to each charge $q$ there corresponds a single mass $m(q)>0$ (with $m(\overline{q})=m(q)$), and for simplicity, we restrict ourselves to this setting\footnote{At the cost of a little more notational effort, our results can be shown to also hold in the more general case where in each sector there exist finitely many masses $m(q)_k$ with mass shells separated from the rest of the energy-momentum spectrum in that sector.}.

Since we are working in two spacetime dimensions, states of a single particle of fixed mass $m>0$ and charge $q$ can be described in momentum space by square integrable rapidity wave functions in
$L^2(\Rl,d\te)$, where the rapidity $\te$ is connected to the on-shell momentum via
\begin{align}\label{eq:palpha}
	p_m(\te)
	:=
	m\left(
	\begin{array}{c}
	\cosh\te\\
	\sinh\te
	\end{array}
	\right)
	\,.
\end{align}
On $L^2(\Rl,d\te)$,  the proper orthochronous Poincar\'e group $\PGpo$ acts via the unitary, strongly continuous, positive energy, irreducible representations
\begin{align}
	(U_{1,m}(x,\la)\psi)(\te)
	:=
	e^{ip_m(\te)\cdot x}\cdot \psi(\te-\la)
	\,,\qquad m>0
	\,,
\end{align}
where $(x,\la)\in\PGpo$ denotes the Poincar\'e transformation consisting of a boost with rapidity $\la\in\Rl$ and a subsequent space-time translation by $x\in\Rl^2$.

For several particle species, the single particle Hilbert space has the form
\begin{align}\label{eq:OneParticleSpace}
	\Hil_1
	:=
	L^2(\Rl,d\te)\ot\K
	\,,
\end{align}
where $\K$ is a finite-dimensional Hilbert space, $D:=\dim\K$. More precisely, we decompose $\Hil_1$ into subspaces of fixed charge $q\in\Q$ and mass $m(q)$,
\begin{align}
	\Hil_1
	=
	\bigoplus_{q\in\Q}\Hil_{1,q}
	\,,\qquad
	\Hil_{1,q}=L^2(\Rl,d\te)\ot\K_q\,,
\end{align}
where the gauge group $G$ acts on $\K_q$ via a unitary irreducible representation $V_{1,q}$ in the class $q\in\Q$, and trivially on $L^2(\Rl,d\te)$, and the Poincar\'e group acts on $\Hil_{1,q}$ via the representation $U_{1,m(q)}\ot{\rm id}_{\K_q}$. Clearly the two group representations
\begin{align}\label{eq:U1}
	U_1
	&:=
	\bigoplus_{q\in\Q} \left(U_{1,m(q)}\ot{\rm id}_{\K_q}\right)
	\,,\qquad
	V_1
	:=
	\bigoplus_{q\in\Q} \left({\rm id}_{L^2(\Rl,d\te)}\ot V_{1,q}\right)
\end{align}
are unitary and commute.

Some examples of models with a single mass $m>0$ might help to illustrate this setting: {\em i)}~For a neutral particle, take $G=\{e\}$ and $\Hil_1=L^2(\Rl,d\te)$, {\em ii)} for a model of
electric charge, take $G=U(1)$ and the two conjugate irreducible representations $V_{1,\pm}(e^{i\vartheta})=e^{\pm i\vartheta}$ on $\K_\pm=\Cl$, {\em iii)}~for an $O(N)$ $\sigma$-model, take
$G=O(N)$ for some $N\geq3$, and the defining self conjugate irreducible representation of $O(N)$ on $\K=\Cl^N$. The special case {\em i)} of a single neutral particle species will be referred to as the
scalar case. All our subsequent analysis reduces to the known results established in \cite{Lechner:2003} and \cite{Lechner:2008} for the scalar case.
\\
\\
In the following, we will always tacitly refer to a fixed particle spectrum given by the data $G,\Q,\{V_{1,q}\}_{q\in\Q},\{m_q\}_{q\in\Q}$ and complying with the above assumptions. It will be convenient to use a particular orthonormal basis for $\K$ \eqref{eq:OneParticleSpace}: For each subspace $\K_q$ of fixed charge, we choose an orthonormal basis, and denote their direct sum by $\{e^\alpha\,:\,\alpha=1,...,D\}$. We can thus associate with each index $\alpha$ a definite charge $q_{[\alpha]}$ and mass $m_{[\alpha]}:=m(q_{[\alpha]})$. The corresponding components of vectors $\Psi_1\in\Hil_1$ will be denoted by $\te\mapsto\Psi_1^\alpha(\te)$.

We will write $(\,\cdot\,,\,\cdot\,)$ for the scalar product on $\K$, put $\I:=\{1,...,D\}$, and make use of standard multi index notation for tensor products. For example, we write
$\xi^\balpha:=(e^{\alpha_1}\ot...\ot e^{\alpha_n},\,\xi)$, $\balpha=(\alpha_1,...,\alpha_n)$, for vectors $\xi\in\K^{\ot n}$, and $T^\balpha_\bbeta:=(e^{\alpha_1}\ot...\ot
e^{\alpha_n},\,T\,e^{\beta_1}\ot...\ot e^{\beta_n})$ for tensors $T\in\B(\K^{\ot n})$, $n\in\Nl$. Furthermore, given
$T\in\B(\K\ot\K)$ and $n\geq2$, we will use the shorthand notation
$T_{n,k}:=1_{k-1}\ot T\ot 1_{n-k-1}$, $k=1,...,n-1$, where $1_r$ denotes the identity on $\K^{\ot r}$.

The description of the single particle structure is completed by a remark on the TCP symmetry. In view of our above assumption regarding conjugate charges $q,\overline{q}\in\Q$, we
have a single particle TCP operator $J_1$ on $\Hil_1$ (See for example \cite{DoplicherHaagRoberts:1974, BuchholzFredenhagen:1982, GuidoLongo:1995, Mund:2001}). It is the product of a charge conjugation operator exchanging the representation spaces $\K_q$ and $\K_{\overline{q}}$, and a space-time reflection, acting by complex conjugation on $L^2(\Rl,d\te)$. When working in the basis $e^\alpha$, this simply means that we have an
involution $\alpha\mapsto\overline{\alpha}$ of $\{1,...,D\}$ (that is, a permutation of $D$ elements with $\overline{\overline{\alpha}}=\alpha$) such that $m_{[\overline{\alpha}]}=m_{[\alpha]}$ and
$q_{[\overline{\alpha}]}=\overline{q_{[\alpha]}}$, and the TCP operator reads
\begin{align}\label{eq:J1}
	(J_1\Psi_1)^\alpha(\te)
	:=
	\overline{\Psi_1^{\overline{\alpha}}(\te)}
	\,.
\end{align}
By straightforward calculation, one checks that $J_1$ is an antiunitary involution which commutes with $V_1$ and extends the representation $U_1$ to the proper Poincar\'e group $\PGp$, including the space-time reflection
$j(x_0,x_1):=(-x_0,-x_1)$, by setting $U_1(j):=J_1$.
\\
\\
We now come to specifying the interaction of the models to be constructed. Our point of view is that of inverse scattering theory, and since we want to study completely integrable models, we take a
{\em factorizing} S-matrix as an input to our construction. Such a collision operator is completely fixed by collision processes with two incoming and two outgoing particles, and enjoys a number
of special properties \cite{Dorey:1998}:
\begin{itemize}
	\item There is no particle production.
	\item The S-matrix kernels for processes with $n$ incoming and $n$ outgoing particles are products of kernels of $(2\to2)$ processes.
	\item The sets of incoming and outgoing momenta coincide.
	\item Particles of different mass do not interact.
\end{itemize}
This particular structure makes the two particle S-matrix the main object of interest in factorized scattering. By Lorentz invariance, processes with incoming particles of types $\alpha,\beta$ and rapidities $\te_1,\te_2$, and outgoing types $\gamma,\delta$ and rapidities $\te_1',\te_2'$ are governed by an amplitude $S^{\alpha\beta}_{\gamma\delta}(\te)$ depending only on the difference of rapidities $\te=\te_1-\te_2=\te_1'-\te_2'$, i.e. we can view $S$ as a map $\Rl\to\B(\K\ot\K)$. Further constraints on $S$ arise from general S-matrix properties like unitarity, Poincar\'e invariance, TCP invariance, crossing symmetry and hermitian analyticity. For a thorough discussion of all these standard properties, we refer to the textbook and review \cite{AbdallaAbdallaRothe:1991, Mussardo:1992, Dorey:1998}. Note that there several different conventions regarding the positions of the indices on $S$ are used in the literature. For example, in many places one finds the order of the two upper indices reversed, i.e. $S^{\beta\alpha}_{\gamma\delta}$ instead of $S^{\alpha\beta}_{\gamma\delta}$.

We now give an abstract definition of the class of $S$-matrices we will consider, using the notation $\Strip(a,b):=\{\zeta\in\Cl\,:\,a<{\rm Im}\zeta<b\}$ for strips in the complex plane.

\begin{definition}\label{definition:SMatrix}
     An $S$-matrix (corresponding to the particle spectrum given by $G, \Q, \{V_{1,q}\}_{q\in\Q}$, $\{m_q\}_{q\in\Q}$) is a continuous bounded function $S:\overline{\Strip(0,\pi)}\to\B(\K\ot\K)$ which is analytic in the interior of this strip and satisfies for arbitrary $\te,\te'\in\Rl$, $\alpha,\beta,\gamma,\delta\in\I$,
     \begin{definitionlist}
	  \item\label{item:Unitarity} Unitarity:
	  \begin{align}
	       S(\te)^*=S(\te)^{-1}
	  \end{align}
	  \item\label{item:HermitianAnalyticity} Hermitian analyticity:
	  \begin{align}
	       S(\te)^{-1}=S(-\te)
	  \end{align}
	  \item\label{item:YangBaxterEqn} Yang-Baxter equation:
	  \begin{align}
	     (S(\te)\ot1_1)(1_1\ot S(\te+\te'))(S(\te')\ot 1_1)=(1_1\ot S(\te'))(S(\te+\te')\ot 1_1)(1_1\ot S(\te))
	  \end{align}
	  \item\label{item:TranslationInvariance} Translational invariance\footnote{It will become apparent later that this is the right condition for ensuring translational invariance of $S$. Also note that Lorentz invariance poses no further condition as $S$ will only depend on differences of rapidities.}:
	  \begin{align}
	       S^{\alpha\beta}_{\gamma\delta}(\te)=0\;\text{ if }\; m_{[\alpha]} \neq m_{[\delta]}\;\text{ or }\; m_{[\beta]}\neq m_{[\gamma]} .
	  \end{align}
	  \item\label{item:TCPInvariance} TCP invariance:
	  \begin{align}
	       S^{\alpha\beta}_{\gamma\delta}(\te)
	       =
	       S^{\overline{\delta}\overline{\gamma}}_{\overline{\beta}\overline{\alpha}}(\te)
	  \end{align}
	\item\label{item:GaugeInvariance} Gauge invariance:
	\begin{align}
		[S(\te),\,V_1(g)\ot V_1(g)]=0
		\,,\qquad
		g\in G,\;\te\in\Rl\,.
	\end{align}
	\item\label{item:Crossing} Crossing symmetry:
	\begin{align}
		S^{\alpha\beta}_{\gamma\delta}(i\pi-\te)
		=
		S^{\overline{\gamma}\alpha}_{\delta\overline\beta}(\te)
	\end{align}
     \end{definitionlist}
     The family of all S-matrices will be denoted $\SF$.
\end{definition}

In view of the required invariance properties of $S$, many of its components are related or have to vanish. For example, in the case of a theory with a single mass, gauge group $G=U(1)$, and the two conjugate representations $V_{1,\pm}(e^{i\vartheta})=e^{\pm i\vartheta}$, we have $D=2$ and hence $S(\te)$ can be viewed as a $(4\times4)$-matrix. But as a consequence of gauge invariance, TCP invariance, and crossing symmetry, only two of its 16 components are non-zero and independent. In many articles on factorizing S-matrices, these amplitudes are taken as the main quantities of interest. For our approach, however, it will be more convenient to consider $S$ as a single object. We also point out that the above conditions on $S$ can also be formulated in a manifestly basis-invariant manner \cite{Bischoff:2012}.

In the scalar case, conditions \refitem{item:YangBaxterEqn},  \refitem{item:TranslationInvariance}, \refitem{item:TCPInvariance} and \refitem{item:GaugeInvariance} drop out. In particular the absence of the Yang-Baxter equation \refitem{item:YangBaxterEqn} simplifies the structure significantly in that case, so that the general form of $S\in\SF$ can be worked out explicitly \cite{Lechner:2006}. For $\dim\K>1$, the general solution of the constraints summarized in Definition \ref{definition:SMatrix} is not known. However, many special solutions, corresponding to model theories such as $O(N)$ $\sigma$-models, are known and will be discussed in Section \ref{section:examples}.
\\
\\
In the following, it will not be necessary to rely on the detailed structure of particular solutions to the constraints summarized in Definition \ref{definition:SMatrix}. We therefore consider some arbitrary $S\in\SF$, and proceed to the construction of an associated quantum field theory. As $S$ will be fixed in the following, we do not reflect the $S$-dependence of various objects introduced subsequently in our notation.

The first step is the construction of a convenient Hilbert space. A look at the scalar case shows that different equivalent choices are possible -- compare the $S$-symmetric Fock space used in \cite{Lechner:2008} with the usual Bose/Fermi-Fock spaces used in \cite{Lechner:2011, Alazzawi:2012}. We will use the $S$-symmetric version here. In a different context, this construction was carried out by Liguori and Mintchev for $S$-matrices satisfying only conditions \refitem{item:Unitarity}--\refitem{item:YangBaxterEqn} of Definition \ref{definition:SMatrix} \cite{LiguoriMintchev:1995-1}. The more particular structure of conditions \refitem{item:TranslationInvariance}--\refitem{item:Crossing} will enter at a later stage, in analogy to the scalar case discussed in \cite{Lechner:2003}. We recall here this construction.

Starting from the single particle space \eqref{eq:OneParticleSpace}, we consider the $n$-fold tensor products $\Hil_1^{\ot n}=L^2(\Rl^n,d^n\bte)\ot\K^{\ot n}$, and introduce the operators, $n\in\Nl$, $k\in\{1,...,n-1\}$, $\Psi_n\in\Hil_1^{\ot n}$,
\begin{align}\label{eq:Dnk}
     (D_{n,k}\Psi_n)(\bte)
     :=
     S(\te_{k+1}-\te_k)_{n,k}\Psi_n(\te_1,...,\te_{k+1},\te_k,...,\te_n)
     \,,
\end{align}
where $\bte:=(\te_1,...,\te_n)$. Thanks to properties \refitem{item:Unitarity}--\refitem{item:YangBaxterEqn} of Definition \ref{definition:SMatrix}, these operators generate a representation of the permutation group $\frS_n$ of $n$ letters. As usual, we denote the transposition exchanging $k$ and $k+1$ by $\tau_k\in\frS_n$, and define for arbitrary $i_1,...,i_r\in\{1,...,n-1\}$
\begin{align}\label{eq:Dn}
	D_n(\tau_{i_1}\cdots\tau_{i_r}):=D_{n,i_1}\cdots D_{n,i_r}
	\,.
\end{align}

\begin{lemma}{\bf \cite{LiguoriMintchev:1995-1}}\label{lemma:Dn}
   $D_n$ \eqref{eq:Dn} is a unitary representation of $\frS_n$ on $\Hil_1^{\ot n}$.
\end{lemma}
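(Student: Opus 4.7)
The plan is to verify that the assignment $\tau_k\mapsto D_{n,k}$ respects the Coxeter presentation of the symmetric group. Recall that $\frS_n$ is generated by the adjacent transpositions $\tau_1,\dots,\tau_{n-1}$ subject to (a) $\tau_k^2=e$, (b) $\tau_k\tau_l=\tau_l\tau_k$ for $|k-l|\geq 2$, and (c) the braid relation $\tau_k\tau_{k+1}\tau_k=\tau_{k+1}\tau_k\tau_{k+1}$. Once these three relations are verified on the generators $D_{n,k}$, the map $D_n$ is well-defined on $\frS_n$, and since each generator is unitary, so is the resulting representation.

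Unitarity of $D_{n,k}$ is immediate: it is the product of the unitary flip $\Psi_n\mapsto \Psi_n(\dots,\te_{k+1},\te_k,\dots)$ on $L^2(\Rl^n,d^n\bte)$ and the operator of multiplication by the unitary-valued function $\bte\mapsto S(\te_{k+1}-\te_k)_{n,k}$ on $\K^{\ot n}$ (unitary by condition \refitem{item:Unitarity}). For the involution relation (a), I would compute $D_{n,k}^2\Psi_n$ directly: after the second application the rapidities are restored to $\bte$, and one picks up the operator product $S(\te_{k+1}-\te_k)_{n,k}S(\te_k-\te_{k+1})_{n,k}$, which equals $1$ by combining unitarity \refitem{item:Unitarity} with hermitian analyticity \refitem{item:HermitianAnalyticity}. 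Relation (b) follows because for $|k-l|\geq 2$ the $S$-factors $S(\te_{k+1}-\te_k)_{n,k}$ and $S(\te_{l+1}-\te_l)_{n,l}$ act on disjoint pairs of tensor slots (so they commute in $\B(\K^{\ot n})$), and the rapidity transpositions $(k\,k{+}1)$ and $(l\,l{+}1)$ commute on $L^2(\Rl^n)$; no $S$-factor depends on a rapidity moved by the other flip, so the two operators commute.

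The braid relation (c) is the heart of the argument, and is the step I expect to be the main obstacle, not because it is deep but because one must book-keep both the rapidity substitutions and the tensor-slot positions simultaneously. I would treat the case $n=3$, $k=1$, and write out $D_{3,1}D_{3,2}D_{3,1}\Psi_3(\te_1,\te_2,\te_3)$ and $D_{3,2}D_{3,1}D_{3,2}\Psi_3(\te_1,\te_2,\te_3)$. A direct evaluation shows that in both cases the argument of $\Psi_3$ becomes $(\te_3,\te_2,\te_1)$, and the accumulated prefactor is, with $\te:=\te_2-\te_1$, $\te':=\te_3-\te_2$, equal to
\begin{align*}
(S(\te)\ot 1_1)(1_1\ot S(\te+\te'))(S(\te')\ot 1_1)\quad\text{resp.}\quad (1_1\ot S(\te'))(S(\te+\te')\ot 1_1)(1_1\ot S(\te))\,,
\end{align*}
which coincide by the Yang--Baxter equation \refitem{item:YangBaxterEqn}. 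The general case $n\geq 3$ reduces to this by noting that all tensor slots other than $k,k{+}1,k{+}2$ carry identity operators and rapidities not affected by the three flips involved, so the same computation applies inside the relevant three-fold factor $\K^{\ot 3}\subset\K^{\ot n}$.

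Since (a), (b), (c) hold on the generators, $D_n$ extends consistently to a group homomorphism $\frS_n\to\U(\Hil_1^{\ot n})$, proving the lemma.
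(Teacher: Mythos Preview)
Your argument is correct and is the standard route: verify the Coxeter relations (a)--(c) for the generators $D_{n,k}$, using unitarity and hermitian analyticity for (a), disjointness of tensor slots and rapidity variables for (b), and the Yang--Baxter equation for (c). The paper does not actually supply a proof of this lemma but simply attributes it to \cite{LiguoriMintchev:1995-1}; your write-up fills in exactly the computation that reference carries out, so there is nothing to compare against here beyond noting that your approach is the expected one.
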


At this point, we do not explicitly compute the representing operators $D_n(\pi)$. For a closed formula in the scalar case $\K=\Cl$, see \cite{Lechner:2006}. We only note here that by construction of $D_n$, for every $\pi\in\frS_n$ there exists a unitary tensor $\Rl^n\ni\bte\mapsto S_n^\pi(\bte)\in\U(\K^{\ot n})$ such that
\begin{align}\label{eq:SPi}
   (D_n(\pi)\Psi_n)(\bte)
   =
   S_n^\pi(\bte)\Psi_n(\te_{\pi(1)},...,\te_{\pi(n)})
   \,,\qquad
   \Psi_n\in\Hil_1^{\ot n}
   \,.
\end{align}
Clearly $S_n^{\tau_k}(\bte)=S(\te_{k+1}-\te_k)_{n,k}$, and for some other relevant permutations $\pi$, the tensor $S_n^\pi(\bte)$ will be calculated later on.

The orthogonal projection $P_n\in\B(\Hil_1^{\ot n})$ onto the $D_n$-invariant subspace will be denoted $P_n=\frac{1}{n!}\sum_{\pi\in\frS_n}D_n(\pi)$. We now define the {\it $S$-symmetrized Fock space} $\Hil$ over $\Hil_1$ as
\begin{align}\label{eq:Hil}
     \Hil:=\bigoplus_{n=0}^\infty \Hil_n
     \,,\qquad
     \Hil_n:=P_n\Hil_1^{\ot n},\;n\geq1\,,\;\;\Hil_0:=\Cl\,.
\end{align}
Its elements are thus sequences $\Psi=(\Psi_0,\Psi_1,...\,)$, $\Psi_n\in L^2(\Rl^n,d^n\bte)\ot\K^{\ot n}$, subject to the symmetry condition (here and in the following, we will make use of the summation convention)
\begin{align}\label{eq:SSymmetry}
     \Psi_n^\balpha(\bte)
     =
     S^{\alpha_k\alpha_{k+1}}_{\beta_k\beta_{k+1}}(\te_{k+1}-\te_k)
     \Psi_n^{\alpha_1...\alpha_{k-1}\beta_{k}\beta_{k+1}\alpha_{k+2}...\alpha_n}(\te_1,...,\te_{k+1},\te_k,...,\te_n)
     \,,
\end{align}
and having finite norm
\begin{align*}
     \|\Psi\|^2
     =
     \langle\Psi,\Psi\rangle
     =
     \sum_{n=0}^\infty \int d^n\bte\,(\Psi_n(\bte),\Psi_n(\bte))
     =
     \sum_{n=0}^\infty \int d^n\bte\,\sum_{\balpha}\overline{\Psi_n^\balpha(\bte)}\Psi_n^\balpha(\bte)
     <
     \infty\,.
\end{align*}
Occasionally we will also use the orthogonal projection $P:\bigoplus_n\Hil_1^{\ot n}\to\Hil$, the ``particle number operator'' $(N\Psi)_n:=n\Psi_n$, and the dense subspace $\DD\subset\Hil$ of ``finite particle number''. For the time being, these are just names for certain subspaces, their physical interpretation in terms of particle states will be justified later in scattering theory.

Next we discuss Poincar\'e symmetries and inner symmetries on $\Hil$. On the unsymmetrized Fock space $\widehat{\Hil}:=\bigoplus_n\Hil_1^{\ot n}$, we have a natural representation  $\widehat{U}$ of $\PGpo$ by second quantization of \eqref{eq:U1}, i.e.
\begin{align}\label{eq:Uhat}
   [\widehat{U}(a,\la)\Psi]_n^{\balpha}(\bte)
   :=
   \exp (i \sum^{n}_{l=1} \; p_{\alpha_l}(\te_l) \cdot a) \;\Psi_n^\balpha(\te_1-\la,...,\te_n-\la)
   \,,
\end{align}
where $p_{\alpha_l}$ is shorthand for $p_{m_{[\alpha_l]}}$. It is straightforward to verify that $\widehat{U}$ is a unitary strongly continuous positive energy representation of $\PGpo$, with up to a phase unique invariant unit vector $\Om$, given by $\Om_n:=\delta_{n,0}$. The representation $\widehat{U}$ can be extended to the full Poincar\'e group in various ways. However, only specific operators implementing space-, time-, and
space-time-reflection restrict to the $S$-symmetric subspace. We will only work with the proper Poincar\'e group here, and choose as an implementation of the space-time reflection $j(x):=-x$ the
anti-unitary TCP operator $\widehat{J}:\widehat{\Hil}\to\widehat{\Hil}$,
\begin{align}\label{eq:Jhat}
   (\widehat{J}\Psi)_n^\balpha(\bte)
   :=
   \overline{\Psi_n^{\overline{\alpha_n}...\overline{\alpha_1}}(\te_n,...,\te_1)}
   \,.
\end{align}
Clearly $\widehat{J}$ restricts to the single particle TCP operator \eqref{eq:J1} on $\Hil_1$ and satisfies $\widehat{J}\,^2=1$. Making use of $m_{[\overline{\alpha}]}=m_{[\alpha]}$, it is straightforward to check that $\widehat{U}(j):=\widehat{J}$ extends $\widehat{U}$ to a representation of the proper Poincar\'e group $\PGp$.

Also the inner symmetry group $G$ acts naturally on $\widehat{\Hil}$, via the unitary representation
\begin{align}\label{eq:VSecondQuantized}
	\widehat{V}
	:=
	\bigoplus_{n=0}^\infty V_1^{\ot n}
	\,.
\end{align}

\begin{lemma}\label{lemma:U}
     The representations $\widehat{U}$ of $\PGp$ and $\widehat{V}$ of $G$ commute and leave the subspace $\Hil\subset\widehat{\Hil}$ invariant. Their restrictions will be denoted
     \begin{align}\label{eq:U}
        U(a,\la):=\widehat{U}(a,\la)|_{\Hil}
        \,,\qquad
        J:=\widehat{J}|_{\Hil}
        \,,
        \qquad
        V(g):=\widehat{V}(g)|_{\Hil}
        \,.
     \end{align}
\end{lemma}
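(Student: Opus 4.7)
The assertion splits naturally into three pieces: (a) the two group representations $\widehat{U}$ and $\widehat{V}$ commute on $\widehat{\Hil}$; (b) $\widehat{V}(g)$ leaves $\Hil = P\widehat{\Hil}$ invariant for every $g \in G$; (c) $\widehat{U}(x,\la)$ and $\widehat{J}$ leave $\Hil$ invariant. For (b) and (c) it suffices in each case to verify that the operator in question commutes with every generator $D_{n,k}$ of the representation $D_n$, hence with the projection $P_n = \frac{1}{n!}\sum_{\pi}D_n(\pi)$.

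For (a), the commutation of $\widehat{U}(x,\la)$ and $\widehat{V}(g)$ is immediate by second quantization from the commutativity of $U_1$ and $V_1$ stated in \eqref{eq:U1}, since $\widehat{U}(x,\la)$ acts by rapidity translation and a diagonal phase, while $\widehat{V}(g)$ acts on the $\K$-tensor factor. The only nontrivial check is $\widehat{J}\widehat{V}(g) = \widehat{V}(g)\widehat{J}$; I plan to verify this by unfolding both sides in components and using the single-particle identity $\overline{V_1(g)^{\bar\alpha}_{\bar\beta}} = V_1(g)^\alpha_\beta$ (i.e.\ $[J_1,V_1(g)]=0$) together with the fact that the $n$-fold tensor $V_1(g)^{\otimes n}$ is invariant under reversal of its $n$ identical factors.

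For (b), the gauge invariance condition \refitem{item:GaugeInvariance} in Definition~\ref{definition:SMatrix} yields $[S(\te), V_1(g)\otimes V_1(g)] = 0$, hence $[S(\te)_{n,k}, V_1(g)^{\otimes n}] = 0$; combining this with the fact that $V_1(g)^{\otimes n}$ acts trivially on the rapidity variables (so commutes with the transposition $\te_k \leftrightarrow \te_{k+1}$) gives $[\widehat{V}(g), D_{n,k}] = 0$. For the proper-orthochronous part of (c), I will compute $\widehat{U}(x,\la)D_{n,k}$ and $D_{n,k}\widehat{U}(x,\la)$ directly on wave functions; the boost piece causes no trouble since $S$ depends only on $\te_{k+1} - \te_k$, and the only issue is the equality of the translation phases, which requires
\[
e^{i(p_{\alpha_k}(\te_k)+p_{\alpha_{k+1}}(\te_{k+1}))\cdot a}
=
e^{i(p_{\beta_{k+1}}(\te_k)+p_{\beta_k}(\te_{k+1}))\cdot a}
\]
on the support of $S^{\alpha_k\alpha_{k+1}}_{\beta_k\beta_{k+1}}(\te_{k+1}-\te_k)$. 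This is precisely the content of the translational invariance condition \refitem{item:TranslationInvariance}, since $p_m$ depends only on the mass $m$.

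The most delicate point is the TCP half of (c). I expect this to be the main obstacle, because it requires all three of unitarity \refitem{item:Unitarity}, hermitian analyticity \refitem{item:HermitianAnalyticity} and TCP invariance \refitem{item:TCPInvariance} to conspire. The plan is to compute $\widehat{J}D_{n,k}\widehat{J}$ componentwise: the rapidity reversal moves the swap from slots $(k,k+1)$ to slots $(n-k, n-k+1)$ with argument $\te_{n-k+1}-\te_{n-k}$, while the antilinearity of $\widehat{J}$ complex-conjugates the matrix entry of $S$ and the charge conjugation sends $\alpha \mapsto \bar\alpha$. Using $\overline{S^{ab}_{cd}(\te)} = S^{cd}_{ab}(-\te)$ (a consequence of \refitem{item:Unitarity} and \refitem{item:HermitianAnalyticity}) and then \refitem{item:TCPInvariance} together with $\bar{\bar\alpha}=\alpha$, the matrix element rearranges exactly to $S^{\alpha_{n-k}\alpha_{n+1-k}}_{\gamma_{n-k}\gamma_{n+1-k}}(\te_{n+1-k}-\te_{n-k})$, giving $\widehat{J}D_{n,k}\widehat{J} = D_{n,n-k}$. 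Since $\tau_k \mapsto \tau_{n-k}$ is an automorphism of $\frS_n$ (conjugation by the reversal permutation), it follows that $\widehat{J}P_n\widehat{J} = P_n$, completing the argument.
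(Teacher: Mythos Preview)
Your proposal is correct and follows essentially the same route as the paper's own proof: both reduce the invariance of $\Hil$ to commutation with the generators $D_{n,k}$, use the rapidity-difference dependence for boosts, condition \refitem{item:TranslationInvariance} for translations, condition \refitem{item:GaugeInvariance} for $\widehat{V}(g)$, and the combination of \refitem{item:Unitarity}, \refitem{item:HermitianAnalyticity}, \refitem{item:TCPInvariance} to obtain $\widehat{J}D_{n,k}=D_{n,n-k}\widehat{J}$. Your final remark that $\tau_k\mapsto\tau_{n-k}$ is an automorphism of $\frS_n$ is exactly what the paper leaves implicit when it concludes that $\widehat{J}$ commutes with the average $P_n$.
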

\begin{proof}
     As $\widehat{U}$ and $\widehat{V}$ preserve the grading of $\widehat{\Hil}$ w.r.t. the particle number, the claim about  restrictability follows once we established that $\widehat{U}(a,\la)$, $\widehat{J}$ and $\widehat{V}(g)$ commute with the projections $P_n$ onto the subspaces $\Hil_n\subset\Hil_1^{\ot n}$ invariant under the representation $D_n$ \eqref{eq:Dn} of the permutation group. For the boosts $(0,\la)$, this is clear since the $S$-matrix only depends on differences of rapidities in \eqref{eq:Dnk} and the transpositions generate $\frS_n$. For the translations $(a,0)$, we compute, $\Psi_n\in\Hil_1^{\ot n}$,
     \begin{align*}
	  ([U(a,0),D_{n,k}]\Psi)_n^\balpha(\bte)
	  &=
	  e^{i\sum^{n}_{l=1} p_{\alpha_l}(\te_l) a}
	  (1- e^{i(p_{\beta_k}(\te_{k+1})-p_{\alpha_{k+1}}(\te_{k+1}))a}e^{i(p_{\beta_{k+1}}(\te_{k})-p_{\alpha_{k}}(\te_{k}))a})
	  \\
	  &\qquad\times S^{\alpha_{k}\alpha_{k+1}}_{\beta_{k}\beta_{k+1}}(\te_{k+1}-\te_k)
	  \Psi_n^{\alpha_1...\beta_{k}\beta_{k+1}...\alpha_n}(\te_1,...,\te_{k+1},\te_k,...,\te_n)
	  \,.
     \end{align*}
     Thanks to Definition \ref{definition:SMatrix} \refitem{item:TranslationInvariance}, the $S$-matrix element vanishes unless the masses $m_{[\beta_k]}=m_{[\alpha_{k+1}]}$, $m_{[\beta_{k+1}]}=m_{[\alpha_{k}]}$ coincide, which implies identical on-shell momenta $p_{\beta_k}(\te_{k+1})=p_{\alpha_{k+1}}(\te_{k+1})$, $p_{\beta_{k+1}}(\te_{k})=p_{\alpha_{k}}(\te_{k})$ \eqref{eq:palpha}. Thus $[U(a,0),D_{n,k}]\Psi_n=0$, and since $k$ and $\Psi_n$ were arbitrary, we conclude $[U(a,0),P_n]=0$.

     The same argument shows also that $\widehat{V}(g)$ commutes with $\widehat{U}(a,\la)$. As $J_1$ was built in such a way that it commutes with $V_1(g)$ for any $g\in G$, it is clear that also $J$ and $V(g)$ commute. Furthermore, each $\widehat{V}(g)$ restricts to $\Hil$ since $V_1(g)\ot V_1(g)$ commutes with $S(\te)$.

     Concerning $\widehat{J}$, we first note that written in components, the $S$-matrix properties \refitem{item:Unitarity}, \refitem{item:HermitianAnalyticity} and \refitem{item:TCPInvariance} of Definition \ref{definition:SMatrix} combine to the identity
     \begin{align*}
	  \overline{S^{\alpha\beta}_{\gamma\delta}(\te)}
	  =
	  S^{\overline{\beta}\overline{\alpha}}_{\overline{\delta}\overline{\gamma}}(-\te)
	  \,.
     \end{align*}
     With this information we compute
     \begin{align*}
	  [\widehat{J}\,D_{n,k}\Psi_n]^\balpha(\bte)
	  &=
	  \overline{[D_{n,k}\Psi_n]^{\overline{\alpha_n}...\overline{\alpha_1}}(\te_n,...,\te_1)}
	  \\
	  &=
	  \overline{S^{\overline{\alpha_{n-k+1}}\,\overline{\alpha_{n-k}}}_{\overline{\beta_{n-k+1}}\,\overline{\beta_{n-k}}}(\te_{n-k}-\te_{n-k+1})\,\Psi_n^{\overline{\alpha_n}...\overline{\beta_{n-k+1}}\,\overline{\beta_{n-k}}...\overline{\alpha_1}} (\te_n,..,\te_{n-k},\te_{n-k+1},..,\te_1)}
	  \\
	  &=
	  S_{\beta_{n-k}\,\beta_{n-k+1}}^{\alpha_{n-k}\,\alpha_{n-k+1}}(\te_{n-k+1}-\te_{n-k})
	  \overline{\Psi_n^{\overline{\alpha}_n...\overline{\beta}_{n-k+1}\,\overline{\beta}_{n-k}...\overline{\alpha}_1} (\te_n,..,\te_{n-k},\te_{n-k+1},..,\te_1)}
	  \\
	  &=
	  [D_{n,n-k}\widehat{J}\,\Psi_n]^\balpha(\bte)\,.
     \end{align*}
     This commutation relation implies that $\widehat{J}$ commutes with the average $P_n$ over $\frS_n$.
\end{proof}

For our purposes, the TCP operator $J$ is an important object as it will allow us to connect localization regions extending to left and right spacelike infinity, and incoming with outgoing scattering states. Under further assumptions on $S$ (see, for example, \cite[p.~12]{Dorey:1998}), one can also build models in which the S-matrix is invariant under the symmetries of time reflection and parity separately. In this case, also an extension of $\widehat{U}$ to the full Poincar\'e group can be restricted to $\Hil$. But these more particular properties will not be relevant here.
\\
\\
As a prerequisite for the definition of quantum fields, we also recall the structure of creation and annihilation operators on the $S$-symmetric Fock space $\Hil$. On $\widehat{\Hil}$, we have the usual unsymmetrized operators $a(\varphi)$, $\ad(\varphi)$, $\varphi\in\Hil_1$. They are defined by linear and continuous extension from
\begin{align}
   \ad(\varphi)\psi_1\ot...\ot\psi_n
   &:=
   \sqrt{n+1}\,\varphi\ot\psi_1\ot...\ot\psi_n\,,\qquad\psi_1,...,\psi_n\in\Hil_1\,,
   \label{eq:AD}
   \\
   a(\varphi)\psi_1\ot...\ot\psi_n
   &:=
   \sqrt{n}\,\langle\varphi,\psi_1\rangle\,\psi_2\ot...\ot\psi_n
   \,,\qquad\;\;
   a(\varphi)\Om:=0\,,
   \label{eq:A}
\end{align}
to $\Hil_n$, and then to the subspace of finite particle number, where they satisfy $a(\varphi)^*\supset\ad(\varphi)$. We introduce their projections onto $\Hil$ as
\begin{align}\label{eq:zzd}
     \zd(\varphi):=P\ad(\varphi)P\,,\qquad z(\varphi):=Pa(\varphi)P\,,\qquad \varphi\in\Hil_1,
\end{align}
and the distributional kernels $z^\#_\alpha(\te)$ related to these operators by
\begin{align}\label{eq:ZZDKernels}
     \zd(\varphi)
     =
     \int d\te\,\varphi_\alpha(\te)\zd_\alpha(\te)
     \,,\qquad
     z(\varphi)
     =
     \int d\te\,\overline{\varphi_\alpha(\te)}z_\alpha(\te)
     \,.
\end{align}

\begin{proposition}\label{proposition:ZZD}
     Let $\varphi\in\Hil_1$ and $\Psi\in\DD$ be arbitrary.
     \begin{propositionlist}
	  \item\label{item:ZZDexplicit} The operators \eqref{eq:zzd} are explicitly given by
	  \begin{align}\label{eq:Zexplicit}
	       [z(\varphi) \Psi]_n^\balpha(\bte)
	       &=
	       \sqrt{n+1} \int d\te'\,\overline{\varphi^{\beta}(\te')}
	       \Psi_{n+1}^{\beta\balpha}(\te',\bte),
	       \\
	       \label{eq:ZDexplicit}
	       [\zd(\varphi)\Psi]_n(\bte)
	       &=
	       \frac{1}{\sqrt{n}}\sum_{k=1}^n S_n^{\sigma_k}(\bte)(\varphi(\te_k)\ot\Psi_{n-1}(\te_1,...,\hat{\te}_k,...,\te_n)),\qquad n\geq1\,,
	       \\
	       [\zd(\varphi)\Psi]_0
	       &=
	       0
	       \,,
	       \label{eq:ZD0}
	  \end{align}
	  where $\hat{\te}_k$ means that this variable is omitted, and the permutations $ \sigma_k\in\frS_n$ are defined as $\sigma_k:=\tau_{k-1}\tau_{k-2} \cdots \tau_1$ for $k\geq1$, and $\sigma_1:={\rm id}$.
	  \item\label{item:ZZDStar} We have
	  \begin{align}\label{eq:ZZDStar}
	       z(\varphi)^*\supset\zd(\varphi)\,.
	  \end{align}
	  \item\label{item:ZZDCovariance} For $(a,\la)\in\PGpo$ and $g\in G$, we have
	  \begin{align*}
	     U(a,\la)z^\#(\varphi)U(a,\la)^{-1}
	     &=
	     z^\#(U(a,\la)\varphi)
	     \,,\\
	     V(g)z^\#(\varphi)V(g)^{-1}
	     &=
	     z^\#(V(g)\varphi)
	     \,,
	  \end{align*}
	   where $z^\#$ stands for either $z$ or $\zd$.
	  \item\label{item:ZZDNBounds} With respect to the particle number operator $N$, there hold the bounds
	  \begin{align}\label{eq:NBoundsOnZZD}
	       \|z(\varphi)\Psi\| \leq \|\varphi\| \|N^{1/2}\Psi\|, \qquad \|z^{\dagger}(\varphi)\Psi\| \leq \|\varphi\| \|(N+1)^{1/2}\Psi\|\,.
	  \end{align}
	  \item\label{item:ZFAlgebra} The distributional kernels $z^\#_\alpha(\te)$ satisfy
	  \begin{align}
	       z_{\alpha}(\te) z_{\beta}(\te') - S^{\beta\alpha}_{\delta\gamma}(\te-\te') z_{\gamma}(\te') z_{\delta}(\te)  &=0,
	       \label{eq:ZZ-Kernel-Commutator}
	       \\
	       \zd_{\alpha}(\te) \zd_{\beta}(\te') - S^{\gamma\delta}_{\alpha\beta}(\te-\te') \zd_{\gamma}(\te')\zd_{\delta}(\te)  &=0,
	       \\
	       z_{\alpha}(\te) \zd_{\beta}(\te') - S^{\alpha\gamma}_{\beta\delta}(\te'-\te) \zd_{\gamma}(\te') z_{\delta}(\te)
	       &=
	       \delta^{\alpha\beta}\delta(\te-\te')\cdot 1\,.
	  \end{align}
     \end{propositionlist}
\end{proposition}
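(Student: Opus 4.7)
My strategy is to reduce each part to properties of the unsymmetrized operators $a(\varphi), \ad(\varphi)$ on $\widehat{\Hil}$, exploiting that the projection $P$ commutes with all symmetries (Lemma \ref{lemma:U}). I treat the parts in the stated order, since (ii)--(v) all build on the explicit formulas in (i).

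For (i), the expression \eqref{eq:Zexplicit} for $z(\varphi)$ follows from $P\Psi=\Psi$ for $\Psi\in\Hil$ together with \eqref{eq:A} applied componentwise; one observes that the $S$-symmetry \eqref{eq:SSymmetry} of $\Psi_{n+1}$ between slots $k+1,k+2$ translates directly into $S$-symmetry of the contracted function between slots $k,k+1$, so that the outer $P$ is already absorbed. For \eqref{eq:ZDexplicit}, I evaluate $P_{n+1}(\varphi\otimes\Psi_n)$ by decomposing $\frS_{n+1}$ into right cosets of the stabilizer of position $1$, with the representatives $\sigma_k$ (which satisfy $\sigma_k(1)=k$). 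Using $P_n\Psi_n=\Psi_n$ to collapse the sum over the stabilizer, which absorbs a factor $n!$, I obtain $P_{n+1}(\varphi\otimes\Psi_n)=\tfrac{1}{n+1}\sum_{k=1}^{n+1}D_{n+1}(\sigma_k)(\varphi\otimes\Psi_n)$. Rewriting $D_{n+1}(\sigma_k)$ via \eqref{eq:SPi} and multiplying by $\sqrt{n+1}/(n+1)=1/\sqrt{n+1}$ from \eqref{eq:AD}, with indices shifted by one, produces \eqref{eq:ZDexplicit}; formula \eqref{eq:ZD0} is immediate since $\ad$ raises particle number.

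Parts (ii)--(iv) are then short. The adjoint relation in (ii) follows from $a(\varphi)^*\supset\ad(\varphi)$ on $\widehat{\Hil}$ together with $P^*=P$ on the common domain $\DD$. The covariance in (iii) is inherited from the second-quantized covariances of $a(\varphi), \ad(\varphi)$ on $\widehat{\Hil}$ together with $[P,\widehat{U}(a,\la)]=[P,\widehat{J}]=[P,\widehat{V}(g)]=0$ from Lemma \ref{lemma:U}. For (iv), the $N$-bounds \eqref{eq:NBoundsOnZZD} are inherited from the standard bounds on $\widehat{\Hil}$, since $P$ is a contraction that preserves particle number.

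For (v), I work with the distributional kernels from (i) on a dense set of finite-particle vectors. The first relation \eqref{eq:ZZ-Kernel-Commutator} is literally the $S$-symmetry \eqref{eq:SSymmetry} applied to the first two slots of $\Psi_{n+2}$, and the $\zd\zd$-relation is its formal adjoint, with the conjugation of $S$ rewritten using unitarity \refitem{item:Unitarity} and hermitian analyticity \refitem{item:HermitianAnalyticity}. The main obstacle is the mixed $z\zd$ commutator: inserting \eqref{eq:Zexplicit} and \eqref{eq:ZDexplicit} into the product $z_\alpha(\te)\zd_\beta(\te')\Psi$, the $k=1$ summand in the $\zd$-formula places $\varphi(\te')$ unpermuted in the leftmost slot, and its subsequent contraction by $z_\alpha(\te)$ yields the singular term $\delta^{\alpha\beta}\delta(\te-\te')$. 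The remaining summands $k\geq 2$ must be matched with $\zd_\gamma(\te')z_\delta(\te)\Psi$, and the delicate point lies in tracking how the tensor $S_{n+1}^{\sigma_k}(\te,\bte)$ reduces, after removal of one slot by the annihilation, to $S_n^{\sigma_{k-1}}(\bte)$ multiplied by a single factor $S^{\alpha\gamma}_{\beta\delta}(\te'-\te)$ transporting $\te$ past $\te'$; here the compatibility of the $S_n^\pi$ across different $n$, which rests on the Yang-Baxter equation \refitem{item:YangBaxterEqn} via Lemma \ref{lemma:Dn}, is essential.
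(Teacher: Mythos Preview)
Your proposal is correct and follows essentially the same approach as the paper: reduce everything to the unsymmetrized operators $a^\#(\varphi)$ and use that $P$ is a self-adjoint contraction commuting with $\widehat{U}$ and $\widehat{V}$. One minor terminological slip: the coset decomposition in (i) uses \emph{left} cosets $\sigma_k H$ of the stabilizer $H$ of position $1$ (since $\pi=\sigma_k\rho$ with $\rho\in H$ gives $\pi\in\sigma_k H$), not right cosets; otherwise your argument matches the paper's, and for (v) you actually give more detail on the mixed commutator than the paper, which only spells out the $zz$-relation via \eqref{eq:SSymmetry} and declares the remaining two analogous.
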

\begin{proof}
     \refitem{item:ZZDexplicit} By comparison of \eqref{eq:A} and \eqref{eq:Zexplicit}, one observes that the explicit action of $a(\varphi)$ on the finite particle number subspace of $\widehat{\Hil}$ is given by the formula \eqref{eq:Zexplicit}. But this action does not disturb the symmetrization \eqref{eq:SSymmetry}. Hence $a(\varphi)$ leaves $\DD\subset\widehat{\Hil}$ invariant, which implies $z(\varphi)\Psi=Pa(\varphi)P\Psi=a(\varphi)\Psi$, $\Psi\in\DD$, and the claimed formula \eqref{eq:Zexplicit} follows.

     To compute $\zd(\varphi)$, we first note that by definition of this operator, $[\zd(\varphi)\Psi]_n=\sqrt{n}P_n(\varphi\ot\Psi_{n-1})$ for $n\geq1$, and $[\zd(\varphi)\Psi]_0=0$. This implies in particular \eqref{eq:ZD0}. Each permutation $\pi\in\frS_n$, $n\geq2$, can be decomposed according to $\pi = \sigma_k \rho$ with some $\sigma_k=\tau_{k-1}\tau_{k-2} \cdots\tau_{1} \in \frS_n$ and some $\rho\in\frS_{n-1}$ acting on $\{2,...,n\}\subset\{1,...,n\}$.  On the level of the projection $P_n$, this gives $P_n=\frac{1}{n}\sum_{k=1}^nD_n(\sigma_k)(1\ot P_{n-1})$ \cite{Lechner:2003}. Taking into account \eqref{eq:SPi} yields the claimed formula \eqref{eq:ZDexplicit}.

     \refitem{item:ZZDStar}--\refitem{item:ZZDNBounds} are known to hold for the unsymmetrized creation and annihilation operators $a^\#(\varphi)$, with $U$ replaced by $\widehat{U}$ and $V$ replaced by $\widehat{V}$. Since $z^\#(\varphi)=Pa^\#(\varphi)P$ and $P$ is selfadjoint, \refitem{item:ZZDStar} follows. As $\widehat{U}(a,\la)$, $\widehat{V}(g)$ commute with $P$ and equals $U(a,\la)$, $V(g)$ on $\Hil$, we also have \refitem{item:ZZDCovariance}. The fact that the norm of the orthogonal projection $P$ is $\|P\|=1$ implies \refitem{item:ZZDNBounds}.

     \refitem{item:ZFAlgebra}:  These commutation of distributional kernels can be deduced from the formulas in \refitem{item:ZZDexplicit}, cf. also \cite{LiguoriMintchev:1995-1}. For example, one sees by comparing \eqref{eq:Zexplicit} and \eqref{eq:ZZDKernels} that
	\begin{align*}
		[z_\alpha(\te)\Psi]_n^\btau(\tilde{\bte})
		&=
		\sqrt{n+1}\Psi_{n+1}^{\alpha\btau}(\te,\tilde{\bte})
		\,.
	\end{align*}
	Taking into account the symmetry \eqref{eq:SSymmetry}, we thus find
	\begin{align*}
		[z_\alpha(\te)z_\beta(\te')\Psi]_n^\btau(\tilde{\bte})
		&=
		\sqrt{(n+1)(n+2)}\Psi_{n+2}^{\beta\alpha\btau}(\te',\te,\tilde{\bte})
		\\
		&=
		\sqrt{(n+1)(n+2)}\,S^{\beta\alpha}_{\delta\gamma}(\te-\te')\Psi_{n+2}^{\delta\gamma\btau}(\te,\te',\tilde{\bte})
		\\
		&=
		S^{\beta\alpha}_{\delta\gamma}(\te-\te')\,[z_\gamma(\te')z_\delta(\te)\Psi]_n^{\btau}(\tilde{\bte})
		\,,
	\end{align*}
	which implies \eqref{eq:ZZ-Kernel-Commutator}. The derivation of the other two exchange relations is analogous.
\end{proof}

The algebraic relations in item \refitem{item:ZFAlgebra} are known as the {\em Zamolodchikov--Faddeev} algebra \cite{ZamolodchikovZamolodchikov:1979, Faddeev:1984}, and are frequently used in the context of integrable quantum field theories (see for example \cite{BabujianFoersterKarowski:2006, Smirnov:1992}, and references cited therein). Note in particular that for the constant $S$-matrices $S^{\alpha\beta}_{\gamma\delta}(\te)=\pm\delta^{\alpha}_\delta\delta^\beta_\gamma$, they coincide with the familiar CCR/CAR relations.
\\
\\
The covariance statements in item \refitem{item:ZZDCovariance} do {\em not} extend to the TCP operator $J$. In the next section, we will also need the TCP-transformed creation and annihilation operators,
\begin{align}
     \zd(\varphi)':=J\zd(J\varphi)J
     \,,\qquad
     z(\varphi)':=Jz(J\varphi)J
     \,.
\end{align}
Taking into account that $J$ is an antiunitary involution with $JU(a,\la)J=U(-a,\la)$, it becomes apparent that items \refitem{item:ZZDStar}--\refitem{item:ZZDNBounds} of Proposition \ref{proposition:ZZD} apply to the $z^\#(\varphi)'$ without any changes. The explicit actions in \refitem{item:ZZDexplicit} and the exchange relations in \refitem{item:ZFAlgebra} are different for the $z^\#(\varphi)'$, however. For example, the ``TCP reflected'' annihilation operator acts according to
\begin{align}
     [z(\varphi)'\Psi]_n^\balpha(\bte)
     &=
     \overline{[z(J\varphi)J\Psi]_n^{\overline{\alpha_n}...\overline{\alpha_1}}(\te_n,...,\te_1)}
     \nonumber
     \\
     &=
     \sqrt{n+1} \int d\te'\, \overline{\varphi^{\overline{\beta}}(\te')}
     \overline{(J\Psi)_{n+1}^{\beta\overline{\alpha_n}...\overline{\alpha_1}}(\te',\te_n,...,\te_1)},
     \nonumber
     \\
     &=
     \sqrt{n+1} \int d\te'\, \overline{\varphi^{\beta}(\te')}
     \Psi_{n+1}^{\balpha\beta}(\bte,\te'),
     \label{eq:Z'explicit}
\end{align}
it ``annihilates from the right''. Note that in the last step of the above calculation, we have used that the charge conjugation $\beta\mapsto\overline{\beta}$ is an involution.

The TCP-reflected creation/annihilation operators satisfy commutation relations analogous to the ones listed in Proposition \ref{proposition:ZZD} \refitem{item:ZFAlgebra} with the only difference that $S^{\alpha\beta}_{\gamma\delta}(\te)$ has to be replaced by $\overline{S^{\overline{\alpha}\overline{\beta}}_{\overline{\gamma}\overline{\delta}}(\te)}=S^{\beta\alpha}_{\delta\gamma}(-\te)$. As it will turn out, even more important than the exchange relations of the $z^\#(\varphi)$ and $z^\#(\varphi)'$ amongst each other are their {\em relative} commutation relations. They are determined next.

\begin{proposition}\label{proposition:ZZ'CommutationRelations}
     Let $\varphi_1,\varphi_2\in\Hil_1$ and $\Psi_n\in\Hil_n$, $n\in\Nl_0$. Then
     \begin{align}
	  [z(\varphi_1)',\, z(\varphi_2)]\Psi_n
	  &=
	  0,\label{eq:ZZ'Commutator}
	  \\
	  [\zd(\varphi_1)',\, \zd(\varphi_2)]\Psi_n
	  &=0,
	  \label{eq:ZDZD'Commutator}
	  \\
	  [z(\varphi_1)',\, \zd(\varphi_2)]\Psi_n
	  &=
	  K_n^{\varphi_1\,\varphi_2}\Psi_n,
	  \label{eq:Z'ZDCommutator}
	  \\
	  [\zd(\varphi_1)',\, z(\varphi_2)]\Psi_n
	  &=
	  L_n^{\varphi_1\,\varphi_2}\Psi_n,
	  \label{eq:ZD'ZCommutator}
     \end{align}
     where $K_n^{\varphi_1\,\varphi_2}$ and $L_n^{\varphi_1\,\varphi_2}$ are operators on $\Hil_n$ which act by multiplication with the tensors
     \begin{align}\label{eq:MixedCommutator1}
	  K_n^{\varphi_1\varphi_2}(\bte)^\balpha_\bbeta
	  &=
	  +\int d\te'\,\overline{\varphi_1^{{\gamma}}(\te')}
	  S_{n+1}^{\sigma_{n+1}}(\bte,\te')^{\balpha\gamma}_{\delta\bbeta}\varphi_2^{\delta}(\te')
	  \,,\\
	  L_n^{\varphi_1 \varphi_2}(\bte)^\balpha_\bbeta
	  &=
	  -\int d\te'\,\varphi_1^{{\gamma}}(\te')
	  \overline{S_{n+1}^{\sigma_{n+1}}(\bte,\te')^{\bbeta\gamma}_{\delta\balpha}}\overline{\varphi_2^{\delta}(\te')}
	  \,.
	  \label{eq:MixedCommutator2}
	\end{align}
\end{proposition}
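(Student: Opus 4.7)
My plan is to prove all four commutation relations by direct substitution of the explicit formulas from Proposition \ref{proposition:ZZD}\refitem{item:ZZDexplicit} together with \eqref{eq:Z'explicit}. The guiding intuition is that $z(\varphi_2)$ contracts wave functions against $\varphi_2$ in the \emph{first} rapidity slot, while the TCP-reflected $z(\varphi_1)'$ contracts in the \emph{last} slot; hence the two operators act on disjoint ends of the tensor factor and commute up to a remainder term that arises only when $\zd(\varphi_2)$ inserts $\varphi_2$ into exactly the slot that $z(\varphi_1)'$ removes.

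For \eqref{eq:ZZ'Commutator}, a straightforward computation shows that both $[z(\varphi_1)'z(\varphi_2)\Psi]_{n-2}^\balpha(\bte)$ and $[z(\varphi_2)z(\varphi_1)'\Psi]_{n-2}^\balpha(\bte)$ reduce to the same expression
\begin{equation*}
\sqrt{n(n-1)}\int d\te_a\,d\te_b\,\overline{\varphi_2^\beta(\te_a)}\,\overline{\varphi_1^\gamma(\te_b)}\,\Psi_n^{\beta\balpha\gamma}(\te_a,\bte,\te_b),
\end{equation*}
and vanish for $n<2$. Equation \eqref{eq:ZDZD'Commutator} is then obtained by taking the adjoint of \eqref{eq:ZZ'Commutator} on the dense subspace $\DD$, using $\zd(\varphi)\subset z(\varphi)^*$ and $\zd(\varphi)'\subset z(\varphi)'{}^*$ from Proposition \ref{proposition:ZZD}\refitem{item:ZZDStar} (applied also to the primed operators, as noted in the text preceding the statement).

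The main work lies in the cross relation \eqref{eq:Z'ZDCommutator}. Composing \eqref{eq:ZDexplicit} with \eqref{eq:Z'explicit} yields
\begin{equation*}
[z(\varphi_1)'\zd(\varphi_2)\Psi]_n^\balpha(\bte)
=
\sum_{k=1}^{n+1}\int d\te'\,\overline{\varphi_1^\beta(\te')}\,\bigl[S_{n+1}^{\sigma_k}(\bte,\te')\bigl(\varphi_2(y_k)\ot\Psi_n(\widehat{y_k})\bigr)\bigr]^{\balpha\beta},
\end{equation*}
where $y_k=\te_k$ for $k\le n$, $y_{n+1}=\te'$, and $\Psi_n(\widehat{y_k})$ denotes the wave function at the list $(\te_1,\ldots,\te_n,\te')$ with $y_k$ omitted. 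The crucial structural observation is that for $k\le n$ the permutation $\sigma_k=\tau_{k-1}\cdots\tau_1$ involves only transpositions $\tau_j$ with $j\le n-1$, so each factor $D_{n+1,j}=1_{j-1}\ot S\ot 1_{n-j}$ fixes the last tensor slot. Consequently
\begin{equation*}
S_{n+1}^{\sigma_k}(\bte,\te')=S_n^{\sigma_k}(\bte)\ot 1_1,\qquad k=1,\ldots,n,
\end{equation*}
and the partial sum $\sum_{k=1}^n$ collapses to $[\zd(\varphi_2)z(\varphi_1)'\Psi]_n^\balpha(\bte)$ by reapplying \eqref{eq:ZDexplicit} and \eqref{eq:Z'explicit} in the opposite order. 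The remaining $k=n+1$ term, in which $\sigma_{n+1}=\tau_n\cdots\tau_1$ now involves the last position and places $\varphi_2$ precisely into the slot contracted against $\varphi_1$ by $z(\varphi_1)'$, produces the multiplication operator $K_n^{\varphi_1\varphi_2}$ of \eqref{eq:MixedCommutator1}.

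Finally, \eqref{eq:ZD'ZCommutator} follows by taking the adjoint of \eqref{eq:Z'ZDCommutator} on $\DD$, which gives $L_n^{\varphi_1\varphi_2}=-(K_n^{\varphi_1\varphi_2})^*$; direct inspection of \eqref{eq:MixedCommutator1} and \eqref{eq:MixedCommutator2} confirms that the tensors match (with the transposition $\balpha\leftrightarrow\bbeta$ and $\delta\leftrightarrow\gamma$ accounting for the complex conjugation). I do not anticipate analytic difficulties, since all computations take place on the algebraic subspace $\DD$ of finite particle number, where Proposition \ref{proposition:ZZD}\refitem{item:ZZDNBounds} guarantees that the operators are well defined. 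The delicate step is the bookkeeping of tensor indices and the permutation-stability identity $S_{n+1}^{\sigma_k}(\bte,\te')=S_n^{\sigma_k}(\bte)\ot 1_1$ for $k\le n$, which is the algebraic mechanism by which all lower-order contributions cancel and isolate the multiplicative commutator terms $K_n^{\varphi_1\varphi_2}$ and $L_n^{\varphi_1\varphi_2}$.
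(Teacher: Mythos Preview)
Your proposal is correct and follows essentially the same route as the paper: the two annihilator relations \eqref{eq:ZZ'Commutator}--\eqref{eq:ZDZD'Commutator} are handled by the ``left-slot vs.\ right-slot'' observation and passage to adjoints, and the mixed commutator \eqref{eq:Z'ZDCommutator} is obtained exactly as in the paper via the key identity $S_{n+1}^{\sigma_k}(\bte,\te')=S_n^{\sigma_k}(\bte)\ot 1_1$ for $k\le n$, isolating the single surviving $k=n+1$ term. The derivation of \eqref{eq:ZD'ZCommutator} by $L_n^{\varphi_1\varphi_2}=-(K_n^{\varphi_1\varphi_2})^*$ is also identical to the paper's argument (note, though, that no $\gamma\leftrightarrow\delta$ swap is actually needed in reading off \eqref{eq:MixedCommutator2} from \eqref{eq:MixedCommutator1}; only the $\balpha\leftrightarrow\bbeta$ transposition and complex conjugation are relevant).
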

\begin{proof}
   The first commutation relation \eqref{eq:ZZ'Commutator} can be computed straightforwardly on the basis of \eqref{eq:Zexplicit} and \eqref{eq:Z'explicit}: Since these annihilation operators contract the arguments of $\Psi_n$ from the left and right, respectively, they commute. As the creation operators in \eqref{eq:ZDZD'Commutator} are the adjoints of the annihilation operators in \eqref{eq:ZZ'Commutator}, the commutation relation \eqref{eq:ZDZD'Commutator} follows by taking the adjoint of \eqref{eq:ZZ'Commutator}.

   The verification of the mixed commutation relations requires a calculation. With $\Psi\in\DD$, inserting \eqref{eq:Z'explicit} and \eqref{eq:ZDexplicit} yields
   \begin{align*}
   	&([z(\varphi_1)',\zd(\varphi_2)]\Psi)_n^\balpha(\bte)
   	\\
   	&=
   	\sqrt{n+1}\int d\te_{n+1}\,\overline{\varphi_1^{{\gamma}}(\te_{n+1})}[\zd(\varphi_2)\Psi]_{n+1}^{\balpha\gamma}(\bte,\te_{n+1})
   	\\
   	&\qquad-
   	\frac{1}{\sqrt{n}}\sum_{k=1}^n
   	S_n^{\sigma_k}(\bte)^\balpha_{\delta\beta_1...\beta_{n-1}}
   	\varphi_2^\delta(\te_k)
   	[z(\varphi_1)'\Psi]_{n-1}^{\beta_1...\beta_{n-1}}(\te_1,...\hat{\te}_k,...,\te_n)
   	\\
   	&=
   	\int d\te_{n+1}\,\overline{\varphi_1^{{\gamma}}(\te_{n+1})}
	\sum_{k=1}^{n+1}
	S_{n+1}^{\sigma_k}(\bte,\te_{n+1})^{\balpha\gamma}_{\delta\beta_1...\beta_n}\varphi_2^{\delta}(\te_k)\Psi_n^\bbeta(\te_1,..,\hat{\te}_k,..,\te_{n+1})
	\\
	&\quad-
	\sum_{k=1}^n S_n^{\sigma_k}(\bte)^{\balpha}_{\delta\beta_1...\beta_{n-1}}
	\varphi_2^{\delta}(\te_k)
	\int d\te_{n+1}\,\overline{\varphi_1^{{\beta_n}}(\te_{n+1})}
	\Psi_n^{\beta_1...\beta_n}(\te_1,...,\hat{\te}_k,..,\te_n,\te_{n+1})
	\,.
   \end{align*}
   For terms corresponding to $k<n+1$ in the first sum, the permutation $\sigma_k$ satisfies $\sigma_k(n+1)=n+1$ and can be regarded as an element of $\frS_n$, with $S_{n+1}^{\sigma_k}(\bte,\te_{n+1})^{\balpha\gamma}_{\delta\beta_1...\beta_n}=\delta^\gamma_{\beta_n}S_n^{\sigma_k}(\bte)^{\balpha}_{\delta\beta_1...\beta_{n-1}}$. It is thus apparent that all terms with $k<n+1$ drop out in the above expression, and we are left with the term corresponding to $k=n+1$, which equals \eqref{eq:MixedCommutator1}.

     To compute the last commutator \eqref{eq:MixedCommutator2}, note that in view of $z(\varphi)^*\supset\zd(\varphi)$ and $(z(\varphi)')^*\supset\zd(\varphi)'$ we have
     \begin{align*}
	  L^{\varphi_1\varphi_2}_n(\bte)^\balpha_\bbeta\Psi_n^\bbeta(\bte)
	  &=
	  ([\zd(\varphi_1)',\,z(\varphi_2)]\Psi)_n^\balpha(\bte)
	  =
	  -([z(\varphi_1)',\,\zd(\varphi_2)]^*\Psi)_n^\balpha(\bte)
	  \\
	  &=
	  -(K_n^{\varphi_1\varphi_2}(\bte)^*)^\balpha_\bbeta\Psi_n^\bbeta(\bte)
	  =
	  -\overline{K_n^{\varphi_1\varphi_2}(\bte)_\balpha^\bbeta}\Psi_n^\bbeta(\bte)
	  \,.
     \end{align*}
     Thus $L^{\varphi_1\varphi_2}_n(\bte)^\balpha_\bbeta= -\overline{K_n^{\varphi_1\varphi_2}(\bte)_\balpha^\bbeta}$, from which we read off \eqref{eq:MixedCommutator2}.
\end{proof}

Later on, we will also need the explicit form of the components of the tensors $S_{n+1}^{\sigma_{n+1}}(\bte,\te')$.
\begin{lemma}
	Let $n\in\Nl$ and $k\in\{1,...,n\}$. Then
	\begin{align}\label{eq:SSigma_k}
		S_n^{\sigma_k}(\bte)^\balpha_\bbeta
		&=
		\sum_{\xi_1,...,\xi_k}
		\delta^{\alpha_k}_{\xi_k}\delta^{\beta_1}_{\xi_1}
		\prod_{l=1}^{k-1}
		S^{\alpha_l\xi_{l+1}}_{\xi_l\beta_{l+1}}(\te_k-\te_l)
		\cdot
		\delta^{\alpha_{k+1}}_{\beta_{k+1}}\cdots\delta^{\alpha_{n}}_{\beta_{n}}
		\,.
\end{align}
\end{lemma}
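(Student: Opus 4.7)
The plan is to prove the formula by induction on $k$, building the permutation $\sigma_k$ one transposition at a time and reading off the resulting tensor from the definition of $D_n$.

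\textbf{Base case.} For $k=1$, $\sigma_1 = \mathrm{id}$, so $D_n(\sigma_1) = 1$ and $S_n^{\sigma_1}(\bte)^\balpha_\bbeta = \prod_{l=1}^n \delta^{\alpha_l}_{\beta_l}$. On the right-hand side of \eqref{eq:SSigma_k}, the product $\prod_{l=1}^{0}$ is empty, and the two deltas $\delta^{\alpha_1}_{\xi_1}\delta^{\beta_1}_{\xi_1}$ combine after summing over $\xi_1$ to $\delta^{\alpha_1}_{\beta_1}$, matching the identity tensor.

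\textbf{Inductive step.} Use that $\sigma_{k+1} = \tau_k\sigma_k$, so by the definition \eqref{eq:Dn}, $D_n(\sigma_{k+1}) = D_{n,k}\,D_n(\sigma_k)$. A short calculation, using that each $D_{n,l}$ acts on wave functions via \eqref{eq:Dnk} and that by \eqref{eq:SPi} the representation operator $D_n(\sigma_k)$ is multiplication (in $\K^{\otimes n}$) by the tensor $S_n^{\sigma_k}(\bte)$ followed by the rapidity permutation $\Psi_n(\te_k,\te_1,\ldots,\te_{k-1},\te_{k+1},\ldots,\te_n)$, shows that composing with $D_{n,k}$ from the left multiplies by one additional nearest-neighbor factor $S(\te_{k+1}-\te_k)_{n,k}$ evaluated at the correctly permuted rapidities. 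Tracking the permutation of rapidities (which replaces $\te_k-\te_l$ by $\te_{k+1}-\te_l$ and produces the new factor $S(\te_{k+1}-\te_k)_{n,k}$), one obtains
\begin{align*}
    S_n^{\sigma_{k+1}}(\bte)
    \;=\;
    S(\te_{k+1}-\te_k)_{n,k}\,S(\te_{k+1}-\te_{k-1})_{n,k-1}\cdots S(\te_{k+1}-\te_1)_{n,1}\,.
\end{align*}
Equivalently, $S_n^{\sigma_k}(\bte)$ is a cascade of nearest-neighbor $S$-factors with arguments $\te_k-\te_l$, acting successively on sites $(1,2),(2,3),\ldots,(k-1,k)$.

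\textbf{Reading off components.} The final step is pure index bookkeeping. Introduce intermediate indices $\xi_1,\ldots,\xi_k$ at the horizontal lines between consecutive crossings in the cascade, with $\xi_1=\beta_1$ (bottom left input), $\xi_k=\alpha_k$ (top right output), and at the $l$-th crossing ($l=1,\ldots,k-1$) the $S$-matrix carries upper indices $(\alpha_l,\xi_{l+1})$ and lower indices $(\xi_l,\beta_{l+1})$ with argument $\te_k-\te_l$. Summing over the internal indices and collecting the delta functions from the $n-k$ untouched sites yields exactly \eqref{eq:SSigma_k}.

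\textbf{Main obstacle.} There is no deep difficulty; the only care required is to keep the ordering of the transpositions straight (the product $D_{n,k-1}\cdots D_{n,1}$ is read right to left, so $\sigma_k$ maps $1\mapsto k$ and shifts $j\mapsto j-1$ for $2\le j\le k$) and to convert the successive rapidity swaps correctly so that all $S$-factors eventually carry arguments of the form $\te_k-\te_l$.
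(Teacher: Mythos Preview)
Your proof is correct and follows essentially the same approach as the paper: induction on $k$ using the decomposition $\sigma_{k+1}=\tau_k\sigma_k$ and tracking how the extra $D_{n,k}$ factor acts on the already-permuted rapidities. The only presentational difference is that you first derive the closed operator product $S_n^{\sigma_{k+1}}(\bte)=S(\te_{k+1}-\te_k)_{n,k}\cdots S(\te_{k+1}-\te_1)_{n,1}$ and then read off components, whereas the paper carries out the induction step directly in component form; both routes are equivalent.
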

\begin{proof}
	We will proceed by induction in $k$. For $k=1$, we have $\sigma_1={\rm id}$, implying $D_n(\sigma_1)=1$ and thus $S_n^{\sigma_1}(\bte)=1_n$. This is reproduced by \eqref{eq:SSigma_k}:
	\begin{align*}
		S_n^{\sigma_1}(\bte)^\balpha_\bbeta
		&=
		\sum_{\xi_1}
		\delta^{\alpha_1}_{\xi_1}\delta^{\beta_1}_{\xi_1}
		\cdot
		\delta^{\alpha_{2}}_{\beta_{2}}\cdots\delta^{\alpha_{n}}_{\beta_{n}}
		=
		\delta^{\alpha_1}_{\beta_1}\cdots\delta^{\alpha_{n}}_{\beta_{n}}
		=
		(1_n)^\balpha_\bbeta
		\,.
	\end{align*}
	For the induction step $k\to k+1$, we take into account $\sigma_{k+1}=\tau_k\sigma_k$ and calculate, $\Psi_n\in\Hil_n$, writing $\te_{a,b}:=\te_a-\te_b$ as a shorthand notation,
	\begin{align*}
		[&D_n(\sigma_{k+1})\Psi_n]^\balpha(\bte)
		=
		S^{\alpha_k\alpha_{k+1}}_{\eps\beta_{k+1}}(\te_{k+1,k})
		[D_n(\sigma_k)\Psi_n]^{\alpha_1...\alpha_{k-1}\eps\beta_{k+1}...\alpha_n}(\te_1,...,\te_{k+1},\te_k,...,\te_n)
		\\
		&=
		S^{\alpha_k\alpha_{k+1}}_{\eps\beta_{k+1}}(\te_{k+1,k})
		\sum_{\xi_1,...,\xi_k}
		\delta^{\eps}_{\xi_k}\delta^{\beta_1}_{\xi_1}
		\prod_{l=1}^{k-1}
		S^{\alpha_l\xi_{l+1}}_{\xi_l\beta_{l+1}}(\te_{k+1,l})
		\Psi_n^{\beta_1...\beta_{k+1}\alpha_{k+2}..\alpha_n}(\te_{k+1},\te_1,..,\hat{\te}_{k+1},..,\te_n)
		\\
		&=
		\sum_{\xi_1,...,\xi_k}
		\delta^{\alpha_{k+1}}_{\xi_{k+1}}
		\delta^{\beta_1}_{\xi_1}
		S^{\alpha_k\xi_{k+1}}_{\xi_k\beta_{k+1}}(\te_{k+1,k})
		\prod_{l=1}^{k-1}
		S^{\alpha_l\xi_{l+1}}_{\xi_l\beta_{l+1}}(\te_{k+1,l})
		\Psi_n^{\beta_1...\beta_{k+1}\alpha_{k+2}..\alpha_n}(\te_{k+1},\te_1,..,\hat{\te}_{k+1},..,\te_n)
		.
	\end{align*}
	The last line coincides with $S_n^{\sigma_{k+1}}(\bte)^\balpha_\bbeta\Psi_n^\bbeta(\te_{k+1},\te_1,..,\hat{\te}_{k+1},..,\te_n)$ when $S^{\sigma_{k+1}}_n(\bte)^\balpha_\bbeta$ is given by \eqref{eq:SSigma_k}. As the tensors $S_n^\pi(\bte)$ were defined by this equation \eqref{eq:SPi}, the proof is finished.
\end{proof}

For later reference, we note here that with the above result, we have found in particular an explicit form of the integral kernel appearing in \eqref{eq:MixedCommutator1} and \eqref{eq:MixedCommutator2}, namely
\begin{align}\label{eq:SSigma_nExplicit}
	S_{n+1}^{\sigma_{n+1}}(\bte,\te')^{\balpha\gamma}_{\delta\bbeta}
	&=
	\sum_{\xi_1,...,\xi_{n+1}}
	\delta^{\gamma}_{\xi_{n+1}}\delta^{\delta}_{\xi_1}
	\prod_{l=1}^n
	S^{\alpha_l\xi_{l+1}}_{\xi_l\beta_l}(\te'-\te_l)
	\,.
\end{align}

\section{Multi-component wedge-local fields}\label{section:fields}

We now turn to the definition of two quantum fields $\phi,\phi'$ on two-dimensional Minkowski space which are auxiliary but important objects in our construction. As before, we will assume a fixed $S$-matrix $S\in\SF$ and suppress all dependence on $S$ in our notation.

Similar to the single particle space $\Hil_1=L^2(\Rl,d\te)\ot\K$, we will deal with test functions $f\in\Ss(\Rl^2)\ot\K$ having several components $x\mapsto f_\alpha(x):=(e^\alpha,f(x))$. Their rapidity space wave functions are defined as
\begin{align}\label{eq:fpm}
	f^\pm_\alpha(\te)
	:=
	\fti_\alpha(\pm p_\alpha(\te))
	=
	\frac{1}{2\pi}
	\int d^2x\,f_\alpha(x)\,e^{\pm ip_\alpha(\te)\cdot x}
	\,.
\end{align}
As before, $p_\alpha(\te)=m_{[\alpha]}(\cosh\te,\,\sinh\te)$ denotes the momentum on the mass shell with mass $m_{[\alpha]}$. Clearly, $f^\pm_\alpha\in L^2(\Rl,d\te)$ for $f_\alpha\in\Ss(\Rl^2)$, and so we may consider $f^\pm$ as vectors in $\Hil_1$. Obviously the maps $\Ss(\Rl^2)\ot\K\ni f\mapsto f^\pm\in\Hil_1$ are linear, and it is not difficult to see that they are continuous as well.

On $\Ss(\Rl^2)\ot\K$, we have a natural action of the proper orthochronous Poincar\'e group by pullback, $f\mapsto (a,\la)\rhd f:= f\circ(a,\la)^{-1}$, $(a,\la)\in\PGpo$. Here $(a,\la)x:=\La_\la x+a$, where $\La_\la$ denotes the Lorentz boost matrix with rapidity $\la$. This action extends to the proper Poincar\'e group by implementing spacetime reflection $j(x):=-x$ as the TCP type map
\begin{align*}
	(j\rhd f)_\alpha(x)
	:=
	\overline{f_{\overline{\alpha}}(-x)}
	\,.
\end{align*}
Taking into account $m_{[\overline{\alpha}]}=m_{[\alpha]}$, it is straightforward to check with these definitions that $f\mapsto f^+$ and $f\mapsto Jf^-$ are covariant in the sense that
\begin{align}\label{eq:Equivariance}
	(g\rhd f)^+
	=
	U(g)f^+
	\,,\qquad
	J(g\rhd f)^-
	=
	U(g)Jf^-
	\,,\qquad
	g\in\PGp.
\end{align}

The fields we are interested in are defined as, $f\in\Ss(\Rl^2)\ot\K$,
\begin{align}
	\phi(f)
	&:=
	\zd(f^+)+z(Jf^-)
	\,,\\
	\phi'(f)
	&:=
	\zd(f^+)'+z(Jf^-)'
	\,.
\end{align}

\begin{proposition}\label{proposition:PhiPhi'}
	Let $f\in\Ss(\Rl^2)\ot\K$ and $\Psi\in\DD$.
	\begin{propositionlist}
		\item\label{item:PhiLinearAndContinuous} $\Ss(\Rl^2)\ot\K\ni f\mapsto \phi(f)\Psi$ is linear and continuous.
		\item\label{item:PhiAdjoint} $\phi(f)^*\supset\phi(f^*)$ with $(f^*)_\alpha(x):=\overline{f_{\overline{\alpha}}(x)}$.
		\item\label{item:PhiAnalyticVectors} All vectors in $\DD$ are entire analytic for $\phi(f)$. For $f=f^*\in\Ss(\Rl^2)\ot\K$, the field operator $\phi(f)$ is essentially selfadjoint.
		\item\label{item:PhiCovariance} $\phi$ transforms covariantly under $\PGpo$, i.e.,
		\begin{align}\label{eq:PhiCovariance}
			\phi((a,\la)\rhd f)\Psi
			=
			U(a,\la)\phi(f)U(a,\la)^{-1}\Psi
			\,,\qquad
			(a,\la)\in\PGpo.
		\end{align}
		\item\label{item:PhiPhi'Covariance} $\phi(j\rhd f)=J\phi'(f)J$.
		\item\label{item:PhiReehSchlieder} The vacuum vector $\Om$ is cyclic for the field $\phi$, i.e. for any open set $\OO\subset\mathbb{R}^2$, the subspace
		\begin{equation*}
			\DD_{\OO}
			:=
			{\rm span}\{\phi(f_1)\cdots\phi(f_n)\Omega\;:\; f_1,...,f_n\in\Ss(\OO)\otimes\K,\,n\in\Nl_0\}
		\end{equation*}
		is dense in $\Hil$.
		\item\label{item:PhiGaugeSymmetries}
		For $g\in G$, $f\in\Ss(\Rl^2)\ot\K$, let $(V_1(g)f)(x):=V_1(g)f(x)$. Then
		\begin{align}
			V(g)\phi(f)V(g)^{-1}
			&=
			\phi(V_1(g)f)
			\,.
		\end{align}
		\item\label{item:PhiNotLocal} $\phi$ is local if and only if $S^{\alpha\beta}_{\gamma\delta}(\te)=\delta^\alpha_\delta\delta^\beta_\gamma$.
	\end{propositionlist}
	All these statements also hold if $\phi$ is replaced by $\phi'$ (and $\phi'$ by $\phi$ in item \refitem{item:PhiPhi'Covariance}).
\end{proposition}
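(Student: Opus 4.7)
The plan is to derive each of the eight assertions from the properties of the Zamolodchikov operators $z^\#(\varphi)$ and their TCP-reflected partners $z^\#(\varphi)'$ established in Propositions \ref{proposition:ZZD} and \ref{proposition:ZZ'CommutationRelations}, combined with the continuity and Poincar\'e equivariance of the maps $\Ss(\Rl^2)\ot\K\ni f\mapsto f^\pm\in\Hil_1$ encoded in \eqref{eq:fpm} and \eqref{eq:Equivariance}. Items \textit{i})--\textit{iv}) are largely direct: for \textit{i}) I would combine the $N$-bounds from Proposition \ref{proposition:ZZD} with continuity of $f\mapsto f^\pm$; a short Fourier computation using $m_{[\overline{\alpha}]}=m_{[\alpha]}$ yields $(f^*)^+=Jf^-$ and $(f^*)^-=Jf^+$, so that \textit{ii}) reduces to the adjoint relation $z(\varphi)^*\supset\zd(\varphi)$. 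For \textit{iii}) I would iterate the $N$-bound to obtain an estimate of the form $\|\phi(f)^n\Psi\|\le c_f^n\sqrt{(k+n)!/k!}\,\|\Psi\|$ on vectors of particle number $k$; this yields entireness and, for $f=f^*$, essential selfadjointness via Nelson's analytic-vector theorem. Item \textit{iv}) is immediate from the covariance of $z^\#$ in Proposition \ref{proposition:ZZD} together with the equivariance relations \eqref{eq:Equivariance}.

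Items \textit{v}) and \textit{vii}) are essentially bookkeeping. For \textit{v}) I would first check from the definition $(j\rhd f)_\alpha(x)=\overline{f_{\overline{\alpha}}(-x)}$ that $(j\rhd f)^\pm=Jf^\pm$; since $J^2=1$, the definitions $z^\#(\varphi)'=Jz^\#(J\varphi)J$ imply $Jz^\#(\varphi)'J=z^\#(J\varphi)$, whence $J\phi'(f)J=\zd(Jf^+)+z(f^-)=\phi(j\rhd f)$. Item \textit{vii}) proceeds similarly, using $(V_1(g)f)^\pm=V_1(g)f^\pm$ (as $V_1(g)$ is trivial in the rapidity variable) together with the commutation $JV_1(g)=V_1(g)J$ noted in Section \ref{section:S-Matrices} and the gauge covariance of $z^\#$ in Proposition \ref{proposition:ZZD}.

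For the Reeh--Schlieder statement \textit{vi}) I would run the standard spectrum-condition argument: assuming $\Psi\perp\DD_\OO$ and $f_1,\dots,f_n\in\Ss(\OO)\ot\K$, use \textit{iv}) to move translations past the field operators, so that
\begin{equation*}
(a_1,\dots,a_n)\ \longmapsto\ \langle\Psi,\,\phi((a_1,0)\rhd f_1)\cdots\phi((a_n,0)\rhd f_n)\Om\rangle
\end{equation*}
vanishes for $a_i$ in a small real neighborhood, and extends analytically in the variables $a_{k+1}-a_k$ into the tube with imaginary parts in the forward light cone by positivity of the energy (Lemma \ref{lemma:U}) and $U(a,0)\Om=\Om$. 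The edge-of-the-wedge theorem then forces vanishing for all real arguments, so $\Psi$ is orthogonal to $\phi(g_1)\cdots\phi(g_n)\Om$ for arbitrary $g_i\in\Ss(\Rl^2)\ot\K$. A short induction on particle number, based on the observation that the top component of $\phi(g_1)\cdots\phi(g_n)\Om$ equals $\sqrt{n!}\,P_n(g_1^+\otimes\cdots\otimes g_n^+)$ and spans $\Hil_n$ as the $g_i$ vary, then yields $\Psi=0$.

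The main obstacle is item \textit{viii}). The ``if'' direction is routine: for $S^{\alpha\beta}_{\gamma\delta}(\te)=\delta^\alpha_\delta\delta^\beta_\gamma$ the symmetrization projectors $P_n$ reduce to ordinary bosonic symmetrization, the Zamolodchikov--Faddeev algebra collapses to the CCR across all species, and $\phi$ becomes a direct sum of $D$ free Bose fields, which are local. For the converse I would test locality of $\phi$ on the vacuum: the zero-particle component of $[\phi(f),\phi(g)]\Om$ reproduces the ordinary Pauli--Jordan commutator function and vanishes for spacelike separated $f,g$, whereas the two-particle component, as one computes from the $\zd\zd$ exchange relation of Proposition \ref{proposition:ZZD}~\textit{v}), is the contraction of a bilinear expression in $f^+,g^+$ with the tensor
\begin{equation*}
S^{\mu\nu}_{\alpha\beta}(\te_2-\te_1)-\delta^{\mu}_{\beta}\delta^{\nu}_{\alpha}.
\end{equation*}
For $f,g$ with supports in opposite spacelike wedges the functions $f^+,g^+$ are, after a boost, boundary values on the real line of bounded analytic functions in opposite rapidity half-planes, and a Jost-type density argument is needed to conclude that the bracketed tensor must vanish identically in $\te$, which is the delicate step. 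The analogous statements for $\phi'$ follow by interchanging the roles of $z^\#(\varphi)$ and $z^\#(\varphi)'$ and invoking Proposition \ref{proposition:ZZ'CommutationRelations}.
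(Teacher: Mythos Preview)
Your treatment of items \textit{i})--\textit{vii}) matches the paper's proof essentially line by line, including the reduction of \textit{vi}) to density of $\DD_{\Rl^2}$ by the spectrum-condition analytic continuation and then spanning $\Hil_n$ via the top components $\sqrt{n!}\,P_n(g_1^+\ot\cdots\ot g_n^+)$.

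The one place where you diverge from the paper is the ``only if'' direction of \textit{viii}). You correctly isolate the two-particle component of $[\phi(f),\phi(g)]\Om$ and arrive at the tensor $S^{\mu\nu}_{\alpha\beta}(\te_2-\te_1)-\delta^\mu_\beta\delta^\nu_\alpha$ contracted against $f^+,g^+$, but then propose a Jost-type argument exploiting the opposite half-plane analyticity of $f^+,g^+$ coming from wedge supports. This works in principle but is unnecessarily delicate. The paper instead observes that you have already done the hard work in \textit{vi}): the single-particle Reeh--Schlieder property says that $\{f^+:f\in\Ss(\OO_1)\ot\K\}$ and $\{g^+:g\in\Ss(\OO_2)\ot\K\}$ are each dense in $\Hil_1$ for \emph{any} open $\OO_1,\OO_2$, in particular for spacelike separated ones. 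Hence the vanishing of
\[
f^+_\alpha(\te_1)g^+_\beta(\te_2)+S^{\alpha\beta}_{\gamma\delta}(\te_2-\te_1)f^+_\gamma(\te_2)g^+_\delta(\te_1)
-g^+_\alpha(\te_1)f^+_\beta(\te_2)-S^{\alpha\beta}_{\gamma\delta}(\te_2-\te_1)g^+_\gamma(\te_2)f^+_\delta(\te_1)
\]
for such $f,g$ immediately extends by density to all $f^+,g^+\in\Hil_1$; choosing $f^+,g^+$ with disjoint rapidity supports and definite internal indices then forces $S^{\alpha\beta}_{\gamma\delta}(\te)=\delta^\alpha_\delta\delta^\beta_\gamma$ directly, with no analytic continuation needed. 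So your approach is not wrong, but you are overlooking that \textit{vi}) already hands you the density statement that makes the conclusion elementary.
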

\begin{proof}
	\refitem{item:PhiLinearAndContinuous} As $\zd$ is linear, whereas $z$ and $J$ are antilinear, the linearity of $\phi$ is clear. The continuity of $f\mapsto\phi(f)\Psi$ follows from the continuity of $\Ss(\Rl^2)\ot\K\ni f\mapsto Jf^\pm\in\Hil_1$ and the continuity of $\Hil_1\ni\xi\mapsto z^{\#}(\xi)\Psi\in\Hil$ (Prop.~\ref{proposition:ZZD} \refitem{item:ZZDNBounds}).

	\refitem{item:PhiAdjoint} By definition of $f^*$, there holds $(f^*)^\pm=Jf^\mp$. In view of Prop.~\ref{proposition:ZZD} \refitem{item:ZZDStar}, we have
	\begin{align*}
		\phi(f)^*\Psi
		=
		z(f^+)\Psi+\zd(Jf^-)\Psi
		=
		z(J(f^*)^-)\Psi+\zd((f^*)^+)\Psi
		=
		\phi(f^*)\Psi\,.
	\end{align*}
	\refitem{item:PhiAnalyticVectors} This argument is identical to the scalar case, see \cite[Prop.~1(2)]{Lechner:2003}. \refitem{item:PhiCovariance} follows at once from the above mentioned covariance of $f\mapsto f^+$, $f\mapsto Jf^-$, and the covariance of the $z^{\#}$ expressed in Prop.~\ref{proposition:ZZD}~\refitem{item:ZZDCovariance}. For \refitem{item:PhiPhi'Covariance}, we compute
	\begin{align*}
		\phi(j\rhd f)\Psi
		=
		(\zd(Jf^+)+z(JJf^-))\Psi
		=
		(J\zd(f^+)'J+Jz(Jf^-)'J)\Psi
		=
		J\phi'(f)J\Psi\,.
	\end{align*}
	\refitem{item:PhiReehSchlieder} By standard analyticity arguments making use of the fact that $U$ is a positive energy representation of $\PG_+$, it follows that $\DD_\OO\subset\Hil$ is dense if and only if $\DD_{\Rl^2}\subset\Hil$ is dense. Now for any scalar test function $f_0\in\Ss(\Rl^2)$ with $\supp\fti_0$ contained in the forward light cone, and any $\alpha_0\in\I$, the function $f:=f_0\ot e^{\alpha_0}\in\Ss(\Rl^2)\ot\K$ satisfies $f^+_\alpha=\delta_{\alpha\alpha_0}\cdot f_{\alpha_0}^+$ and $f^-_\alpha=0$, $\alpha\in\I$. Thus $\phi(f)=\zd(f^+)$. Furthermore, as $f_0$ and $\alpha_0$ vary within the above limitations, the space spanned by all corresponding $f^+$ is dense in $\Hil_1$. Since the creation operators $\zd$ generate the $S$-symmetric Fock space, this implies that $\DD_{\Rl^2}\subset\Hil$ is dense.

	\refitem{item:PhiGaugeSymmetries} This follows from Proposition~\ref{proposition:ZZD}~\refitem{item:ZZDCovariance} and the fact that $V(g)$ and $J$ commute.

 	\refitem{item:PhiNotLocal} In case the $S$-matrix is $S^{\alpha\beta}_{\gamma\delta}(\te)=\delta^\alpha_\delta\delta^\beta_\gamma$, the $S$-symmetric Fock space is just the completely symmetric Bose Fock space over $\Hil_1$, as can be read off from \eqref{eq:SSymmetry}. It is clear from our construction that in this case, $\phi$ is the free field, which is known to be local. To show that this form of $S$ is also necessary for locality, we consider the two-particle contribution of the field commutator on the vacuum. With $f\in\Ss(\OO_1)\ot\K$, $g\in\Ss(\OO_2)\ot\K$ with spacelike separated open regions $\OO_1,\OO_2\subset\Rl^2$, we have $[[\phi(f),\phi(g)]\Om]_2=[\zd(f^+),\zd(g^+)]\Om=\sqrt{2}P_2(f^+\ot g^+-g^+\ot f^+)$. According to part \refitem{item:PhiReehSchlieder}, $f^+$ and $g^+$ span dense subspaces of $\Hil_1$ as $f,g$ vary within $\Ss(\OO_1)\ot\K$ and $\Ss(\OO_2)\ot\K$, respectively. Hence for any $f,g\in\Ss(\Rl^2)\ot\K$, and any $\alpha,\beta\in\I$, we must have
	\begin{align*}
		0
		&=
		\sqrt{2}[[\phi(f),\,\phi(g)]\Omega]_2^{\alpha\beta}(\theta_1,\theta_2)
		\\
		&=
		f^+_\alpha(\te_1) g^+_\beta(\te_2) + S^{\alpha\beta}_{\gamma\delta}(\te_2-\te_1)
		f^+_\gamma(\te_2) g^+_\delta(\te_1)
		-
		g^+_\alpha(\te_1) f^+_\beta(\te_2) - S^{\alpha\beta}_{\gamma\delta}(\te_2-\te_1)	g^+_\gamma(\te_2) f^+_\delta(\te_1)
		,
	\end{align*}
	which is fulfilled only if $S^{\alpha\beta}_{\gamma\delta}(\te)=\delta^\alpha_\delta\delta^\beta_\gamma$.

	The proofs of all these statements for $\phi'$ are completely analogous.
\end{proof}

Despite not being point-localized Wightman fields, the fields $\phi,\phi'$ satisfy interesting relative commutation relations which make them useful tools in our construction. Namely, as in the scalar case, it can be shown that these fields are {\em relatively wedge-local}. For a precise formulation of this property, first recall that the {\em right wedge} is the region
\begin{align}\label{eq:WR}
	W_R
	:=
	\{x\in\Rl^2\,:\,x_1>|x_0|\}
	\,,
\end{align}
and the {\em set of all wedges} $\W$ is the orbit of $W_R$ under the natural action of $\PG_+$ on $\Rl^2$. As $W_R$ is invariant under boosts, it consists of all translates of $W_R$ and the {\em left wedge} $W_L:=jW_R=-W_R$, which coincides with the causal complement $W_R'$ of $W_R$.

The field operators $\phi'(f)$, $\phi(g)$ are localized in wedges in the following sense. The assignment of $\phi'(f)$ to the localization region $(W_R+\supp f)''$ (the smallest right wedge containing $\supp f$, where $\supp f$ is defined as the smallest subset of $\Rl^2$ containing $\supp f_\alpha$ for all $\alpha\in\I$), and of $\phi(g)$ to the localization region $(W_L+\supp g)''$, is consistent with the principles of Poincar\'e covariance and Einstein causality. The covariance properties have been established in Proposition \ref{proposition:PhiPhi'} \refitem{item:PhiCovariance} and \refitem{item:PhiPhi'Covariance}. But it remains to prove the locality property, i.e. to check that field operators localized in spacelike separated wedges commute. For this to hold, the analyticity properties of the $S$-matrix, and in particular its crossing symmetry (Def.~\ref{definition:SMatrix} \refitem{item:Crossing}), turn out to be crucial.

\begin{theorem}\label{theorem:PhiPhi'WedgeLocality}
	The fields $\phi$ and $\phi'$ are relatively wedge-local: For any $a\in\Rl^2$, $f\in\Ss(W_R+a)\ot\K$, $g\in\Ss(W_L+a)\ot\K$, and $\Psi\in\DD$, we have
	\begin{align}\label{eq:PhiPhi'Commutator}
		[\phi'(f),\phi(g)]\Psi=0\,.
	\end{align}
\end{theorem}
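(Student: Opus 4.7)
The plan is as follows. First, by bilinearity, I would expand $[\phi'(f),\phi(g)]$ into the four commutators of the $z^{\#}$-operators with their TCP-reflected counterparts $z^{\#}(\cdot)'$. Items \eqref{eq:ZZ'Commutator} and \eqref{eq:ZDZD'Commutator} of Proposition~\ref{proposition:ZZ'CommutationRelations} annihilate the two parallel commutators $[\zd(f^+)',\zd(g^+)]$ and $[z(Jf^-)',z(Jg^-)]$ on $\DD$, while \eqref{eq:Z'ZDCommutator}--\eqref{eq:ZD'ZCommutator} identify the remaining two mixed commutators with multiplication on $\Hil_n$ by the tensors $L_n^{f^+,Jg^-}(\bte)$ and $K_n^{Jf^-,g^+}(\bte)$ of \eqref{eq:MixedCommutator1}--\eqref{eq:MixedCommutator2}. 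The theorem therefore reduces to the pointwise tensor identity $K_n^{Jf^-,g^+}(\bte) + L_n^{f^+,Jg^-}(\bte) = 0$ for every $n$, real $\bte$, and configuration of multi-indices.

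To prove this identity, I would combine analyticity of the test-function boundary values with the crossing symmetry of $S$. The wedge-support hypotheses translate, via Fourier--Laplace theory for Schwartz functions, into strip analyticity of $f^+$ and $g^+$ in complementary horizontal strips, continuous on the closures, with boundary values $f^+(\te\pm i\pi)=f^-(\te)$ and $g^+(\te\mp i\pi)=g^-(\te)$. Definition~\ref{definition:SMatrix}~(i) simultaneously gives analyticity and uniform boundedness of each factor of the kernel \eqref{eq:SSigma_nExplicit} on $\Strip(0,\pi)$. Applying the crossing identity of Definition~\ref{definition:SMatrix}~(vii) factor-by-factor, each shifted $S$-block $S^{\alpha_l\xi_{l+1}}_{\xi_l\beta_l}(\te'+i\pi-\te_l)$ can be rewritten as $S^{\overline{\xi_l}\alpha_l}_{\beta_l\overline{\xi_{l+1}}}(\te_l-\te')$. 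Cauchy's theorem then allows shifting the rapidity integration contour by $\pm i\pi$, converting $f^+,g^-$ into $f^-,g^+$ via their boundary identifications; together with the identity $\overline{S^{\alpha\beta}_{\gamma\delta}(\te)} = S^{\overline{\beta}\,\overline{\alpha}}_{\overline{\delta}\,\overline{\gamma}}(-\te)$ from the proof of Lemma~\ref{lemma:U} and the reindexing $\xi_l\mapsto\overline{\xi_l}$ of the internal summation variables, the two tensors transform into each other with opposite sign. The minus sign in \eqref{eq:MixedCommutator2} absorbs those coming from argument reversal and conjugation, producing the desired cancellation.

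The hardest part will be arranging the contour shift so that, after the crossing step, all factors of the integrand are simultaneously analytic on a common strip --- the raw integrand has its $S$-factor analytic on $\Strip(0,\pi)$ but its test-function factors only on the opposite strip, so a naive single-strip Cauchy shift is unavailable. Once the crossing has aligned the analyticities, the analytic justification becomes routine: the vertical pieces of the rectangular contour at $|\re\te'|=R$ vanish as $R\to\infty$ by the Schwartz decay of the rapidity wave functions, uniformly in $\im\te'$, together with the uniform boundedness of $S$ on $\overline{\Strip(0,\pi)}$. The remaining combinatorial work is the verification that the crossed zigzag product in \eqref{eq:SSigma_nExplicit} reassembles, after reindexing and conjugation, into precisely the charge-conjugated tensor structure of the partner term, including the interplay with the TCP involution built into $J$. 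In the scalar case $\K=\Cl$ this recovers the wedge-locality argument of \cite{Lechner:2003}; the multi-component case requires the additional index bookkeeping outlined above.
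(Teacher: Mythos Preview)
Your plan is correct and follows the same route as the paper: reduce (by translation covariance and density of compactly supported test functions) to $a=0$ with $f,g\in C_c^\infty$, expand the commutator via Proposition~\ref{proposition:ZZ'CommutationRelations}, and then prove $K_n^{Jf^-,g^+}+L_n^{f^+,Jg^-}=0$ by a contour shift combined with crossing symmetry and the identity $\overline{S^{\alpha\beta}_{\gamma\delta}(\te)}=S^{\overline{\beta}\,\overline{\alpha}}_{\overline{\delta}\,\overline{\gamma}}(-\te)$.

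One point deserves correction, though it does not affect the validity of your strategy. You anticipate as the ``hardest part'' that the test-function factors and the $S$-factors live on opposite strips, forcing you to apply crossing \emph{before} the contour shift. This is a misreading of the integrand: in $K_n^{Jf^-,g^+}$ the test-function factors are $\overline{(Jf^-)^\gamma(\te')}=f^-_{\overline{\gamma}}(\te')$ and $g^+_\delta(\te')$, not $f^+$ and $g^+$. For $f$ supported in $W_R$ the function $\te'\mapsto f^-_{\overline{\gamma}}(\te')$ is bounded analytic on $\Strip(0,\pi)$, and for $g$ supported in $W_L$ the same holds for $g^+_\delta$. Thus all three factors---$f^-$, $g^+$, and each $S$-block---are simultaneously analytic on $\Strip(0,\pi)$ from the outset, and the Cauchy shift from $\Rl$ to $\Rl+i\pi$ is available immediately. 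Crossing is then invoked \emph{after} the shift, to rewrite each boundary value $S^{\alpha_l\xi_{l+1}}_{\xi_l\beta_l}(\te'+i\pi-\te_l)$ as $S^{\overline{\xi_l}\alpha_l}_{\beta_l\overline{\xi_{l+1}}}(\te_l-\te')$, and the identification with $-L_n^{f^+,Jg^-}$ follows via hermitian analyticity $S(-\te)=S(\te)^*$ and the reindexing $\xi_l\mapsto\overline{\xi_l}$. So the step you flagged as delicate is in fact the routine one; the argument is simpler than you expect.
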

\begin{proof}
	In view of the translational covariance of $\phi$ and $\phi'$, and the invariance of $\DD$ under translations, it is sufficient to consider the case $a=0$. Taking also into account the strong continuity of $f\mapsto\phi'(f)$ and $g\mapsto\phi(g)$ on $\DD$, we may furthermore restrict to compactly supported smooth $f$ and $g$.

	With the help of Proposition \ref{proposition:ZZ'CommutationRelations}, we first simplify the field commutator on a vector $\Psi_n\in\Hil_n$ of definite particle number $n\in\Nl_0$ to
	\begin{align*}
		[\phi'(f),\phi(g)]\Psi_n
		&=
		[\zd(f^+)'+z(Jf^-)',\,\zd(g^+)+z(Jg^-)]\Psi_n
		\\
		&=
		[\zd(f^+)',\,z(Jg^-)]\Psi_n
		+
		[z(Jf^-)',\,\zd(g^+)]\Psi_n
		\\
		&=
		(L_n^{f^+,Jg^-}+K_n^{Jf^-,g^+})\Psi_n
		\,,
	\end{align*}
	where $L_n^{f^+,Jg^-}$ and $K_n^{Jf^-,g^+}$ act by multiplication with the tensors \eqref{eq:MixedCommutator1} and \eqref{eq:MixedCommutator2}. Recalling the form \eqref{eq:SSigma_nExplicit} of their components, and $(Jf^-)^\gamma(\te')=\overline{f^-_{\overline{\gamma}}(\te')}$, it becomes apparent that for proving $[\phi'(f),\phi(g)]\Psi_n=0$ we have to show that
	  \begin{align}\label{eq:KIntegral}
	       K_n^{Jf^-,g^+}(\bte)^\balpha_\bbeta
	       &=
	       \int d\te'\,f^-_{\overline{\gamma}}(\te')\sum_{\xi_1,...,\xi_{n+1}}\delta^\gamma_{\xi_{n+1}}\delta^\delta_{\xi_1}
	       \prod_{l=1}^n S^{\alpha_l\xi_{l+1}}_{\xi_l\beta_l}(\te'-\te_l)
		\cdot
	       g^+_\delta(\te')
	  \end{align}
	  coincides with
	  \begin{align}\label{eq:LIntegral}
          -L_n^{f^+,Jg^-}(\bte)^\balpha_\bbeta
	  &=
	  \int d\te'\,f^+_\gamma(\te')
	  \sum_{\xi_1,...,\xi_{n+1}}
	  \delta^\gamma_{\xi_{n+1}}\delta^\delta_{\xi_1}
	       \prod_{l=1}^n \overline{S^{\beta_l\xi_{l+1}}_{\xi_l\alpha_l}(\te'-\te_l)}
		\cdot
	       g^-_{\overline{\delta}}(\te')
	  \end{align}
     for all $\bte\in\Rl^n$, $\balpha,\bbeta\in\I^n$. We first make some comments about analyticity properties of the various functions appearing in these integrals.
	In view of the compact support of $f,g$, for arbitrary $\gamma,\delta\in\I$, the functions $\te'\mapsto f^-_{\overline{\gamma}}(\te')$ and $\te'\mapsto g^+_{\delta}(\te')$ continue to entire analytic functions. We also recall from \cite{Lechner:2003} that because the supports of $f,g$ are restricted to wedges, these continuations are bounded on the strip $\Strip(0,\pi)$, and $f^-_{\overline{\gamma}}(\te'+i\mu)$ and $g^+_{\delta}(\te'+i\mu)$ decay rapidly to zero as $\te'\to\pm\infty$, uniformly in $\mu\in[0,\pi]$. Finally, the boundary values at the upper end of the strip are given by $f^-_{\overline{\gamma}}(\te'+i\pi)=f^+_{\overline{\gamma}}(\te')$ and $g^+_{\delta}(\te'+i\pi)=g^-_\delta(\te')$, as can be seen from \eqref{eq:fpm}.

     According to Definition \ref{definition:SMatrix}, also $S$ has a bounded analytic continuation to $\Strip(0,\pi)$, with crossing symmetric boundary value $S^{\alpha\beta}_{\gamma\delta}(i\pi-\te)=S^{\overline{\gamma}\alpha}_{\delta\overline\beta}(\te)$. We may thus shift the contour of integration in \eqref{eq:KIntegral} from $\Rl$ to $\Rl+i\pi$, where it reads
     \begin{align*}
       K_n^{Jf^-,g^+}(\bte)^\balpha_\bbeta
	       &=
	       \int d\te'\,f^+_{\overline{\gamma}}(\te')\sum_{\xi_1,...,\xi_{n+1}}\delta^\gamma_{\xi_{n+1}}\delta^\delta_{\xi_1}
	       \prod_{l=1}^n S^{\overline{\xi_l}\alpha_l}_{\beta_l\overline{\xi_{l+1}}}(\te_l-\te')
		\cdot
	       g^-_\delta(\te')
	       \\
	       &=
	       \int d\te'\,f^+_{\gamma}(\te')\sum_{\xi_1,...,\xi_{n+1}}
	       \delta^{\overline{\gamma}}_{\overline{\xi_{n+1}}}
	       \delta^{\overline{\delta}}_{\overline{\xi_1}}
	       \prod_{l=1}^n
	       S^{\xi_l\alpha_l}_{\beta_l\xi_{l+1}}(\te_l-\te')
		\cdot
	       g^-_{\overline{\delta}}(\te')
	       \,.
     \end{align*}
     But since $S(-\te)=S(\te)^*$, we have $S^{\xi_l\alpha_l}_{\beta_l\xi_{l+1}}(\te_l-\te')=\overline{S_{\xi_l\alpha_l}^{\beta_l\xi_{l+1}}(\te'-\te_l)}$, which proves that the above integral coincides with \eqref{eq:LIntegral}. As $n$ was arbitrary, \eqref{eq:PhiPhi'Commutator} follows.
\end{proof}

The crossing property lying at the basis of this theorem provides a close link between particles and fields \cite{Schroer:2010}. Its importance for the wedge-locality of associated fields was known before in the scalar case \cite{Schroer:1999, Lechner:2003}, and by the results presented here, one sees that this links also persists in the presence of more realistic particle spectra, involving antiparticles and charge conjugation.

By construction, the fields $\phi,\phi'$ generate only single particle states from the vacuum and are solutions of the Klein-Gordon equation: If $f_\alpha=0$ for all $\alpha\in\I$ except some index $\alpha=\alpha_0$, the field $\phi(f)$ solves the Klein-Gordon equation with mass $m_{\alpha_0}$. In addition, these fields are localized in wedges and behave in a continuous and bounded manner under Poincar\'e transformations. Thus they are examples of so-called {\em temperate polarization-free generators} \cite{SchroerWiesbrock:2000-1, BorchersBuchholzSchroer:2001}.

\section{Operator-algebraic formulation and local fields}\label{section:operator-algebras}

So far our construction proceeded from a given $S$-matrix $S\in\SF$ to a vacuum Hilbert space $\Hil$ and a pair $\phi,\phi'$ of Poincar\'e covariant wedge-local fields acting on $\Hil$. Whereas these operators are convenient and useful objects for constructing the model with $S$-matrix $S$, they must not be confused with potentially existing point-like localized quantum fields obeying the equation of motion of the dynamics corresponding to $S$. These {\em local} interacting fields, generically denoted $\varphi$ here, are bound to be much more involved objects whose properties are usually studied by methods like perturbation theory, form factor expansions, or Euclidean methods. In the approach presented here, all fields/observables with sharper than wedge-like localization are derived quantities which will be characterized in terms of the auxiliary fields $\phi,\phi'$. For this, it will be advantageous to formulate our models in an operator-algebraic fashion. As the steps necessary for this reformulation are
almost
identical to the scalar case, which is well documented in the literature \cite{Lechner:2003, BuchholzLechner:2004, Lechner:2006}, we can be brief here.

One first proceeds from the field operators $\phi,\phi'$ to the von Neumann algebras they generate, and introduces, $x\in\Rl^2$,
\begin{align}\label{eq:M}
	\F(W_L+x)
	&:=
	\{e^{i\phi(f)}\,:\,f=f^*\in\Ss(W_L+x)\ot\K\}''
	\,,\\
	\F(W_R+x)
	&:=
	\{e^{i\phi'(f)}\,:\,f=f^*\in\Ss(W_R+x)\ot\K\}''
	\,.
\end{align}
Here the (double) prime denotes the (double) commutant in $\B(\Hil)$, i.e. to any wedge $W\in\W$ we associate a von Neumann algebra $\F(W)\subset\B(\Hil)$ generated by the unitaries $\exp i\phi^{(\prime)}(f)$, $f=f^*\in\Ss(W)\ot\K$. The basic properties of these algebras are collected in the following proposition.

\begin{proposition}\label{proposition:WedgeAlgebras}
	Let $S\in\SF$. The above defined algebras $\F(W)$, $W\in\W$, have the following properties, $W,\Wti\in\W$.
	\begin{propositionlist}
		\item Isotony: $\F(W)\subset\F(\Wti)$ for $W\subset\Wti$,
		\item Covariance: $U(x,\la)\F(W)U(x,\la)^{-1}=\F(\La_\la W+x)$, $(x,\la)\in\PG_+$,
		\item Gauge symmetry: $V(g)\F(W)V(g)^{-1}=\F(W)$, $g\in G$.
		\item Locality: $\F(W)\subset\F(\Wti)'$ for $W\subset\Wti'$,
		\item Cyclicity: The vacuum vector $\Om$ is cyclic and separating for $\F(W)$.
	\end{propositionlist}
\end{proposition}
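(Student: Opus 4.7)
The plan is to verify the five properties in turn, deducing each from the corresponding statement about the generating fields $\phi,\phi'$ in Proposition \ref{proposition:PhiPhi'} and Theorem \ref{theorem:PhiPhi'WedgeLocality}. Four of the five are essentially routine consequences of the field-theoretic results already established; only the passage from the vector-level commutator in Theorem \ref{theorem:PhiPhi'WedgeLocality} to strong commutativity of the generating unitaries requires genuine work.

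\textbf{Isotony, covariance, and gauge symmetry.} For isotony, if $W\subset\widetilde W$, then the two wedges must be translates of the same seed wedge (translates of $W_R$ cannot be contained in translates of $W_L$ or vice versa, as the "tip orientations" disagree), so $\Ss(W)\otimes\K\subset\Ss(\widetilde W)\otimes\K$ and the generating unitaries of $\F(W)$ form a subset of those of $\F(\widetilde W)$. For covariance under $\PGpo$, Proposition \ref{proposition:PhiPhi'}~\refitem{item:PhiCovariance} gives $U(a,\la)e^{i\phi(f)}U(a,\la)^{-1}=e^{i\phi((a,\la)\rhd f)}$ (and analogously for $\phi'$), while $(a,\la)\rhd f$ has support $\La_\la\supp f+a$; this sets up a bijection between the generating unitaries of $\F(W)$ and those of $\F(\La_\la W+a)$. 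Extension to the TCP reflection uses $J\phi'(f)J=\phi(j\rhd f)$ from Proposition \ref{proposition:PhiPhi'}~\refitem{item:PhiPhi'Covariance} together with $jW_R=W_L$. Gauge invariance follows analogously from Proposition \ref{proposition:PhiPhi'}~\refitem{item:PhiGaugeSymmetries}: since $V_1(g)$ acts only on the internal space $\K$ and therefore preserves spacetime supports, conjugation by $V(g)$ permutes the generators of $\F(W)$ among themselves.

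\textbf{Locality.} Suppose $W\subset\widetilde W'$; up to interchanging the roles of $W$ and $\widetilde W$ we may take $W=W_L+x$ and $\widetilde W=W_R+x$. Theorem \ref{theorem:PhiPhi'WedgeLocality} gives $[\phi(f),\phi'(g)]\Psi=0$ for all $\Psi\in\DD$, $f=f^*\in\Ss(W)\otimes\K$ and $g=g^*\in\Ss(\widetilde W)\otimes\K$. To conclude that the bounded unitaries $e^{i\phi(f)}$ and $e^{i\phi'(g)}$ commute, I would invoke Proposition \ref{proposition:PhiPhi'}~\refitem{item:PhiAnalyticVectors}: $\DD$ is a common dense invariant core of entire analytic vectors for both fields. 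A standard commutator theorem of Nelson type then ensures that two essentially self-adjoint operators which commute on such a domain have strongly commuting self-adjoint closures, so that their spectral projections, and in particular their unitary exponentials, commute. Passing to double commutants yields $\F(W)\subset\F(\widetilde W)'$. This analytic-vector step is the main obstacle: the vector-level commutator alone is insufficient, and the hypotheses of the strong-commutativity result must be verified with care, paralleling the argument given in the scalar case in earlier work on this construction.

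\textbf{Cyclicity and separating property.} Cyclicity of $\Om$ for $\F(W)$ is essentially the Reeh--Schlieder statement of Proposition \ref{proposition:PhiPhi'}~\refitem{item:PhiReehSchlieder}, translated from polynomial to exponential generators. Since every vector in $\DD$ is entire analytic for $\phi(f)$, the series $e^{it\phi(f)}\Om=\sum_{n\geq 0}(it\phi(f))^n\Om/n!$ converges strongly; differentiating this one-parameter unitary group at $t=0$ recovers $\phi(f)\Om$, so the closure of $\F(W)\Om$ and the closure of the polynomial Reeh--Schlieder subspace $\DD_W$ coincide, and both equal all of $\Hil$. For the separating property I would use the locality just established: $\F(W')\subset\F(W)'$, and by the same Reeh--Schlieder argument $\Om$ is cyclic for $\F(W')$, hence also for $\F(W)'$; cyclicity of a vector for the commutant is equivalent to its being separating for the algebra, so $\Om$ is separating for $\F(W)$.
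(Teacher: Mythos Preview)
Your proposal is correct and follows essentially the same route as the paper: isotony, covariance, and gauge symmetry are read off from the corresponding field-level statements in Proposition~\ref{proposition:PhiPhi'}; locality is obtained by upgrading the commutator of Theorem~\ref{theorem:PhiPhi'WedgeLocality} to strong commutativity of the exponentials via the common core of entire analytic vectors (the paper points to Reed--Simon and \cite[Prop.~5.1]{Lechner:2011} for this step, exactly the Nelson-type argument you indicate); and cyclicity/separating come from Reeh--Schlieder together with locality. The only addition in the paper is the remark that in fact wedge \emph{duality} $\F(W)'=\F(W')$ holds, which is stronger than the locality inclusion you prove, but this is stated rather than argued in detail.
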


The proofs of these facts require only trivial changes in the existing proofs for the scalar case, so that we can content ourselves with a few comments: {\em i)} and {\em ii)} are straightforward consequences of the definition of $\F(W)$ and Proposition \ref{proposition:PhiPhi'} \refitem{item:PhiCovariance}. The gauge symmetry {\em iii)} holds because of the transformation law Proposition~\ref{proposition:PhiPhi'}~\refitem{item:PhiGaugeSymmetries} for $\phi$ and $\phi'$. To show {\em iv)}, one has to check that the field commutation relations of Theorem~\ref{theorem:PhiPhi'WedgeLocality} hold also in the stronger sense that the associated unitary groups commute,
\begin{align}\label{eq:UnitaryGroupsCommute}
	[e^{i\phi'(f)},\,e^{i\phi(g)}]=0
	\,,\qquad
	f=f^*\in\Ss(W_R)\ot\K,\,g=g^*\in\Ss(W_L)\ot\K\,.
\end{align}
This can be done by a calculation on analytic vectors, as in the free field case (\cite{ReedSimon:1975}, see also \cite[Prop.~5.1]{Lechner:2011}). Actually, {\em wedge duality} holds, a property stronger than locality: For each $W\in\W$, one has
\begin{align*}
	 \F(W)'=\F(W')\,.
\end{align*}
The cyclicity statement in {\em v)} is a consequence of Proposition \ref{proposition:PhiPhi'} \refitem{item:PhiReehSchlieder}, and thanks to locality, this implies that $\Om$ is separating for each $\F(W)$ as well.
\\
\\
As mentioned in the previous section, the unbounded field operators $\phi(f)$ and $\phi'(f)$ play the role of polarization-free generators in this context, and are affiliated to the field algebras $\F((W_L+\supp f)'')$ and $\F((W_R+\supp f)'')$ respectively. In particular, by choosing $f$ to have non-zero components only in the representation space $\K_q$ of the irreducible representation $V_{1,q}$ of charge $q$, we obtain field operators which are localized in wedges (equal to spacelike cones in two dimensions) and interpolate between the vacuum and the single particle states in the sector $q$. These results fit well into the model-independent framework of Buchholz and Fredenhagen \cite{BuchholzFredenhagen:1982}.

In the context of such a system of field algebras associated with wedge regions, one can unambiguously define the maximal algebras of fields $\F(\OO)$ localized in smaller spacetime regions $\OO\subset\Rl^2$ as follows \cite{Borchers:1992, BuchholzLechner:2004}: For a double cone, that is a region of the form $\OO_{xy}:=(W_L+x)\cap(W_R+y)$, $x-y\in W_R$, one puts
\begin{align}\label{eq:F(O)}
	\F(\OO_{xy})
	:=
	\F(W_L+x)\cap\F(W_R+y)
	\,.
\end{align}
Algebras associated with arbitrary regions $\OO\subset\Rl^2$ can then be defined by additivity. This construction results in a local net $\OO\mapsto\F(\OO)$ of von Neumann algebras $\F(\OO)\subset\B(\Hil)$ indexed by the family of all subsets $\OO\subset\Rl^2$, and it can be readily checked that this net inherits the basic features isotony, Poincar\'e covariance, gauge invariance and locality from the corresponding properties of the wedge algebras (Proposition \ref{proposition:WedgeAlgebras} {\em i)--iv)}).

The algebras $\F(\OO)$, where $\OO$ is a bounded localization region, can be thought of as being generated by (bounded functions of) local quantum fields $\varphi$ underlying the model, smeared with test functions having support in $\OO$.
In the approach followed here, these local fields are not constructed explicitly\footnote{See however \cite{BostelmannCadamuro:2012} for recent progress in this direction.}, but rather characterized indirectly as elements of the algebra intersections \eqref{eq:F(O)}. It is therefore not clear if such local operators exist at all, or if the model is trivial in the sense that $\F(\OO)$ consists only of multiples of the identity for bounded regions $\OO$. At least three different scenarios regarding the ``size'' of the local field algebra $\F(\OO)$, where $\OO$ is bounded, are conceivable:
\begin{enumerate}
	\item[1)] $\F(\OO)$ has the vacuum $\Om$ as a cyclic vector and is in particular non-trivial.
	\item[2)] $\F(\OO)$ is non-trivial, but does not have the vacuum $\Om$ as a cyclic vector.
	\item[3)] $\F(\OO)$ is trivial, i.e. $\F(\OO)=\Cl\cdot1$.
\end{enumerate}
Scenario 1) describes the situation encountered in a local quantum field theory with localizable charges \cite{DoplicherHaagRoberts:1971,DoplicherHaagRoberts:1974}, whereas Scenario 2) occurs in local theories with gauge charges \cite{BuchholzFredenhagen:1982}. In this case, one would expect that the {\em observable} algebras $\A(\OO)$, i.e. the gauge invariant subalgebras of $\F(\OO)$, have the vacuum as a cyclic vector on the charge zero subspace. Scenario 3) however does not occur in any local field theory, as it implies that this theory does not have any observables localized in $\OO$, and should thus be considered a pathology.

For particular choices of $S$, like the $S$-matrices corresponding to the $O(N)$ $\sigma$-models discussed in Section \ref{section:examples}, compelling evidence exists from other approaches (perturbation theory, large $N$ limits, Euclidean methods, lattice constructions) to the effect that local quantum fields {\em do} exist, and hence these models do not belong to Scenario 3). Rigorous proofs are however usually difficult to obtain, and in the present generality, it seems well possible to also build pathological $S$-matrices which can not be associated with a local theory. Therefore the question arises for which $S\in\SF$ there exists a local quantum field theory with the $S$-matrix given by $S$ as its scattering operator.

The question whether the local algebras \eqref{eq:F(O)} are non-trivial is precisely the question if there exists a local quantum field theory with the $S$-matrix given by $S$ as its scattering operator. After this existence question has been settled, one could look into extending the model-independent analysis of Bostelmann \cite{Bostelmann:2005} to models in two dimensions, and possibly also construct such associated local field operators more explicitly.

Fortunately, a clear-cut sufficient criterion for Scenario~1) exists in the operator-algebraic framework of quantum field theory\footnote{For the gauge-invariant subalgebras on the charge zero space, it can also be applied in Scenario 2).}. This method, known as the modular nuclearity condition, involves the modular operator $\Delta$ of $(\F(W_R),\Om)$, which exists since $\Om$ is cyclic and separating for $\F(W_R)$ \cite{Takesaki:2003}, and the maps, $x\in W_R$,
\begin{align}\label{eq:Xi}
	\Xi(x):\F(W_R)\to\Hil
	\,,\qquad
	\Xi(x)(F)
	:=
	\Delta^{1/4}U(x)F\Om\,.
\end{align}
Using modular theory, it is not difficult to see that $\Xi(x)$ is bounded as a map between the Banach spaces $(\F(W_R),\|\cdot\|_{\B(\Hil)})$ and $\Hil$. In analyzing the size of the intersection
\eqref{eq:F(O)}, it is a crucial question if $\Xi(x)$ is {\em nuclear}, too, i.e. if it can be decomposed into a series of rank one maps with summable norms. Building on earlier investigations of
nuclearity properties \cite{BuchholzWichmann:1986, BuchholzDAntoniLongo:1990-1} and the closely related split property \cite{DoplicherLongo:1984, Mueger:1998}, the following two results have been
established in \cite{BuchholzLechner:2004} and \cite{Lechner:2008}, respectively.

\begin{theorem}
	Assume that the $\Xi(x)$ are nuclear for $x\in W_R$. Then, for any double cone $\OO$,
	\begin{propositionlist}
		\item $\F(\OO)$ is isomorphic to the hyperfinite type III$_1$ factor.
		\item The vacuum vector $\Om$ is cyclic for $\OO$.
	\end{propositionlist}
\end{theorem}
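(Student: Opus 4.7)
The theorem reproduces, in the multi-component setting, the two results proved for the scalar case in \cite{BuchholzLechner:2004, Lechner:2008}, and my plan is to follow the same operator-algebraic route. All the structural inputs used there have been secured above: cyclicity and separating character of $\Om$ for the wedge algebras (Proposition~\ref{proposition:WedgeAlgebras}~\textit{v)}), wedge duality $\F(W)'=\F(W')$, Poincar\'e covariance (Proposition~\ref{proposition:WedgeAlgebras}~\textit{ii)}), and the Bisognano--Wichmann-type identification of the boost subgroup of $U$ with the modular group of $(\F(W_R),\Om)$, which follows from the crossing symmetry (Definition~\ref{definition:SMatrix}~\refitem{item:Crossing}) by the same analytic argument as in \cite{Lechner:2003}. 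The nuclearity of $\Xi(x)$ is the only genuinely new hypothesis to be invoked.

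\textbf{Step 1: from nuclearity to the split property.} First I would invoke the theorem of Buchholz, D'Antoni and Longo~\cite{BuchholzDAntoniLongo:1990-1}, which states that nuclearity of $\Xi(x)$ for $x\in W_R$ implies that the inclusion $\F(W_R+x)\subset\F(W_R)$ is a \emph{standard split inclusion} with respect to $\Om$. Thus there exists a type~I interpolating factor $\NN$ with $\F(W_R+x)\subset\NN\subset\F(W_R)$, and $\Om$ is cyclic and separating for both $\NN$ and the relative commutant $\F(W_R+x)'\cap\F(W_R)$. Translation covariance then transfers splitness to every inclusion $\F(W_R+x')\subset\F(W_R+y')$ with $x'-y'\in\overline{W_R}$.

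\textbf{Step 2: cyclicity and type classification.} For a double cone $\OO_{xy}=(W_L+x)\cap(W_R+y)$ with $x-y\in W_R$, wedge duality rewrites $\F(\OO_{xy})=\F(W_R+x)'\cap\F(W_R+y)$. By Step~1 this is the relative commutant of a standard split inclusion, so $\Om$ is cyclic for $\F(\OO_{xy})$, and by additivity for every open region, giving part~\textit{ii)}. For part~\textit{i)}, I would first observe that $\F(W_R)$ is a type~III$_1$ factor: the Bisognano--Wichmann identification of its modular group with the boost subgroup, together with the non-triviality of the translation spectrum, forces the factor type III$_1$. Combining this with the split structure and the general result of Longo on relative commutants of standard split inclusions inside hyperfinite type~III$_1$ algebras, one concludes that $\F(\OO)$ is isomorphic to the unique hyperfinite type~III$_1$ factor.

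\textbf{Main obstacle.} The conceptually deep step is Step~1, but it is model-independent and already at our disposal from \cite{BuchholzDAntoniLongo:1990-1, BuchholzLechner:2004}; within the present theorem its invocation is immediate, and essentially no new technical work beyond assembling the pieces is required. The truly difficult question, which is \emph{not} addressed here, is the verification of nuclearity of $\Xi(x)$ in concrete models such as the $O(N)$ $\sigma$-models; that problem is deferred by the authors to later work and constitutes the main open challenge of the construction programme.
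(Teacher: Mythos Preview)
The paper does not actually prove this theorem: it is stated without proof and attributed to \cite{BuchholzLechner:2004} (for the split-property/type~III$_1$ structure) and \cite{Lechner:2008} (for cyclicity), with the remark that the two results ``have been established'' there. Your sketch correctly reproduces the line of argument of those references --- nuclearity of $\Xi(x)$ $\Rightarrow$ standard split inclusion of wedge algebras via \cite{BuchholzDAntoniLongo:1990-1}, then cyclicity of $\Om$ for the relative commutant $\F(W_R+x)'\cap\F(W_R+y)=\F(\OO_{xy})$, and the type~III$_1$/hyperfiniteness from the Bisognano--Wichmann form of the modular group together with the Doplicher--Longo structure theory of standard split inclusions --- so there is nothing to correct, and your observation that the genuinely hard step (verifying nuclearity) is deferred is exactly the point the authors make in the surrounding text.
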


\noindent For further consequences of the nuclearity condition, see \cite{Mueger:1998, Lechner:2008}.

Stated informally, this theorem says that in case the $\Xi(x)$ are nuclear, local operators exist in abundance, as expected in a well-behaved local quantum field theory. Moreover, they can be used to construct scattering states, since in particular local operators interpolating between the vacuum and the single particle space (with definite charge) exist. These matters will be discussed in the next section.

Verifying the modular nuclearity condition is thus a possible strategy for proving the existence of a local quantum field theory with the considered $S$-matrix without having to deal with the explicit construction of local interacting field operators. Despite its abstract formulation, this condition takes a rather concrete form in the models at hand. Namely, the modular operator $\Delta$ appearing in \eqref{eq:Xi} acts as a (imaginary) boost transformation. This fact was proven for the scalar case in \cite{BuchholzLechner:2004}, and can be generally expected from the Bisognano-Wichmann theorem \cite{BisognanoWichmann:1976, Mund:2001}. Concretely, we have
\begin{align}
	\Delta^{it}=U(0,-2\pi t)
	\,,\qquad
	t\in\Rl\,.
\end{align}

Again we refrain from giving a formal proof and only indicate the necessary argument. One basically proceeds as in \cite[Prop.~3.1]{BuchholzLechner:2004}, with the only difference that the Poincar\'e group does not act irreducibly on the single particle space here because of the richer particle spectrum considered. But the ``correct'' action of the modular group on the single particle space can be explicitly computed because the fields $\phi,\phi'$ generate single particle states from the vacuum, and these are identical to the ones in free field theory as the $S$-matrix does not enter on the single particle level.

In view of this geometric form of the modular operator, the $\Delta^{1/4}$ in \eqref{eq:Xi} corresponds to an imaginary boost by $\frac{i\pi}{2}$ in the center of mass rapidity. In terms of analytic continuation, that means
\begin{align*}
	(\Xi(x)(F))_n^\balpha(\bte)
	&=
	(U(x)F\Om)_n^\balpha(\te_1-\tfrac{i\pi}{2},...,\te_n-\tfrac{i\pi}{2})
	\\
	&=
	\prod_{k=1}^n e^{m_{\alpha_k}(x_0\sinh\te_k-x_1\cosh\te_k)}
	\cdot
	(F\Om)_n^\balpha(\te_1-\tfrac{i\pi}{2},...,\te_n-\tfrac{i\pi}{2})
	\,.
\end{align*}
Nuclearity estimates of the map $\Xi(x)$ can now be established by showing that the space of the functions $(\Xi(x)(F))_n$ is ``small'' in a specific sense. Two ingredients have to be taken into account here: First, the rapidly decreasing factors $e^{m_{\alpha_k}(x_0\sinh\te_k-x_1\cosh\te_k)}$ (rapidly decreasing in $\te_k$ because $x\in W_R$ and $m_{\alpha_k}>0$), and second, analytic properties of the functions $(F\Om)_n$, $F\in\F(W_R)$. These momentum space analyticity properties derive on the one hand from the spacetime localization of $F$ in the wedge $W_R$, and on the other hand from analyticity properties of the $S$-matrix which enters via the symmetry properties of $(F\Om)_n$. For scalar $S$-matrices, this analysis has been carried out in \cite{Lechner:2008}, and a proof of the modular nuclearity condition for a class of so-called regular (scalar) $S$-matrices has been given.

In the case of general $S$-matrices, the multi-component nature of the $n$-particle functions $(F\Om)_n\in L^2(\Rl^n,d\bte)\ot\K^{\ot n}$ requires a somewhat more involved analysis. However, the indicated
strategy seems to be applicable also here and looks promising in principle \cite{Alazzawi:2013}. We will not enter the discussion of the modular nuclearity condition in the present article in detail, but rather
present some evidence towards its validity in the context of specific $S$-matrices in Section~\ref{section:examples}.

\section{Scattering states and reconstruction of the ${S}$-matrix}\label{section:scattering}

Up to now, the underlying $S$-matrix entered our construction via the symmetrization properties of the vacuum Hilbert space, and the commutation relations of the creation/annihilation operators on this space. In this section, we will explain the physical significance of $S$ by establishing its close connection to the scattering operator of the constructed model.

We will thus be concerned with the calculation of scattering states, and employ the usual methods of Haag-Ruelle-Hepp scattering theory \cite{Araki:1999, Hepp:1965}, taking into account the wedge-locality of the fields as in \cite{BorchersBuchholzSchroer:2001} and the charge structure as in \cite{DoplicherHaagRoberts:1974}. It is a basic prerequisite for the construction of multi particle scattering states that quasilocal operators interpolating between the vacuum and the single particle space exist, and we will therefore require throughout this section --- somewhat stronger than necessary --- that the vacuum vector $\Om$ is cyclic for the field algebra $\F(\OO)$ of some double cone $\OO$, i.e. that Scenario~1) of the previous section applies. As explained there, this assumption is in particular satisfied if the maps $\Xi(x)$ \eqref{eq:Xi} are nuclear.

It is well known that in case the vacuum is cyclic for some field algebra $\F(\OO)$, this algebra also contains field operators $F_q$ of definite charge $q\in\Q$ \cite{DoplicherHaagRoberts:1969}. In particular, picking a test function $h\in\Ss(\Rl^2)$ such that $\supp\hti$ intersects the energy momentum spectrum in the sector $q$ only in $H^+_{m(q)}$ --- recall that according to our assumptions, there exists precisely one isolated mass shell in this sector\footnote{In the case of embedded mass eigenvalues, one would need to employ the methods of \cite{Dybalski:2005} to calculate scattering states.} --- the quasi-local operator $F_q(h)=\int dx\,h(x)\,U(x,0)F_qU(x,0)^{-1}$ creates a single particle state of charge $q$ from the vacuum, that is, $F_q(h)\Om\in\Hil_{1,q}$.

We also introduce the velocity support of mass $m$ of a function $h\in\Ss(\Rl^2)$ as
\begin{align}
	\VV_m(h)
	:=
	\{(1, p_1(p_1^2+m^2)^{-1/2})\,:\,(p_0,p_1)\in\supp\hti\}
	\,,
\end{align}
and for vectors $\Psi_{1,q}\in\Hil_{1,q}$, the velocity support $\VV(\Psi_{1,q})$ is defined as the same set, with mass $m=m(q)$ and the energy momentum spectral support of $\Psi_{1,q}$ instead of $\supp\hti$. It is a consequence of the cyclicity of $\Om$ for $\F(\OO)$ that there exist sufficiently many quasi local creation operators: Given $\Psi_{1,q}\in\Hil_{1,q}$ and $\eps>0$, we find $F\in\F(\OO)$ and $h\in\Ss(\Rl^2)$, with $h$ having velocity support in an arbitrarily small neighborhood of $\VV(\Psi_{1,q})$, such that $\|F_q(h)\Om-\Psi_{1,q}\|<\eps$.

These quasi-local creation operators are related to the asymptotic creation operators as follows. For $h\in\Ss(\Rl^2)$, we define $h_{t,m}\in\Ss(\Rl^2)$, $t\in\Rl$, $m>0$, by
\begin{align}\label{eq:KleinGordon}
	\hti_{t,m}(p)
	:=
	e^{i(p_0-(p_1^2+m^2)^{1/2})\,t}\cdot\hti(p)
	\,,
\end{align}
so that the $t$-dependence of this function drops out on the mass shell $H^+_m$. That is, $h_{t,m}^+=h^+$ and $F_q(h_{t,m(q)})\Om$ is a vector in $\Hil_{1,q}$ independent of $t$. Recall the support properties of $h_{t,m}$ for $t\to\pm\infty$  \cite{Hepp:1965}: Given any $\eps$-neighborhood $\VV_m^\eps(h)$ of $\VV_m(h)$, there exists a Schwartz function $\hhat$ with support in $t\,\VV_m^\eps(h)$, such that for any $N$, we have $|t|^N(\hhat_{t,m}-h_{t,m})\to0$ in the topology of $\Ss(\Rl^2)$ as $t\to\pm\infty$.

The basic statement of Haag-Ruelle scattering theory then is \cite{Araki:1999,DoplicherHaagRoberts:1974,BuchholzFredenhagen:1982} that given field operators $F_1,...,F_n\in\F(\OO)$, charges $q_1,...,q_n\in\Q$, testfunctions $h_1,...,h_n$ with disjoint velocity supports $\VV_{m(q_1)}(h_1),...,\VV_{m(q_n)}(h_n)$, the limits
\begin{align}
	\lim_{t\to\pm\infty}
	F_{1,q_1}(h_{1,t,m(q_1)})\cdots F_{n,q_n}(h_{n,t,m(q_n)})\Om
	=:
	(\psi_1\times...\times\psi_n)_{\rm out/in}
\end{align}
exist and and only depend on the single particle vectors $\psi_k:=F_{k,q_k}(h_k)\Om$. Furthermore, the dependence of $(\psi_1\times...\times\psi_n)_{\rm out/in}$ on the $\psi_k$ is linear and continuous.

Because our auxiliary fields $\phi,\phi'$ are only wedge-local, we will need a somewhat refined analysis, similar to the arguments presented in \cite{BuchholzFredenhagen:1982, BorchersBuchholzSchroer:2001, Lechner:2008}. As in \cite{BorchersBuchholzSchroer:2001}, we write $h\prec_m h'$ for two testfunctions $h,h'\in\Ss(\Rl^2)$ with velocity supports ordered such that $\VV_m(h')-\VV_m(h)\subset\{0\}\times\Rl_+$.

All these notations will also be used for multi component single particle functions of fixed charge and mass. Test functions $f\in\Ss(\Rl^2)\ot\K= \bigoplus_{q\in\Q}\Ss(\Rl^2)\ot\K_q$ will be decomposed according to $f=\bigoplus_q f_q$, and we write $f_{q,t}$ to denote the function with $t$-dependence of every component as in \eqref{eq:KleinGordon} and mass $m=m(q)$. The velocity support $\VV(f_q)$ will be understood as the union of the velocity supports (with mass $m(q)$) of all the components of $f_q$.

Having recalled these facts, we come to the calculation of scattering states in the models at hand.

\begin{proposition}\label{Proposition:ScatteringStates}
	Let $f_1,...,f_n\in\Ss(\Rl^2)\ot\K$ be test functions with $\supp \fti_k$ contained in the forward light cone, and ordered velocity supports, $f_1\prec ... \prec f_n$. Then
	\begin{align}
		\left(f_1^+\times ... \times f_n^+\right)_{\rm out}
		&=
		\phi(f_1)\cdots\phi(f_n)\Om
		=
		\sqrt{n!}P_n(f_1^+\ot...\ot f_n^+)
		\label{eq:ScatteringStatesOut}
		\,,\\
		\left(f_1^+\times ... \times f_n^+\right)_{\rm in}
		&=
		\phi(f_n)\cdots\phi(f_1)\Om
		=
		\sqrt{n!}P_n(f_n^+\ot...\ot f_1^+)
		\,.
	\end{align}
\end{proposition}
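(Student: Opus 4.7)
The plan is to prove the two equalities in turn; the in-state assertion follows by the same arguments with $t \to -\infty$ and the product order reversed.

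The rightmost equality $\phi(f_1) \cdots \phi(f_n) \Om = \sqrt{n!}\, P_n(f_1^+ \ot \cdots \ot f_n^+)$ is a direct computation. Since $\supp \fti_k$ lies in the forward light cone and $-p_\alpha(\te)$ lies in the backward light cone for every $\te \in \Rl$ and $\alpha \in \I$, we have $f_k^- = 0$ in $\Hil_1$, so $\phi(f_k) = \zd(f_k^+)$. Iterating the identity $\zd(\varphi)\Psi_{n-1} = \sqrt{n}\, P_n(\varphi \ot \Psi_{n-1})$ for $\Psi_{n-1} \in \Hil_{n-1}$, which is immediate from Proposition 2.3 (i), then yields the closed form.

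For the leftmost equality, I appeal to Haag-Ruelle-Hepp scattering theory, which is at our disposal thanks to the standing assumption that $\Om$ is cyclic for some $\F(\OO)$. By linearity it suffices to treat each $f_k$ as supported in a single charge sector $q_k$. For $\eps > 0$, cyclicity of $\Om$ yields charge-$q_k$ operators $F_k \in \F(\OO)$ and test functions $h_k$ whose velocity supports lie in small neighborhoods of $\VV(f_k^+)$ and remain ordered as $h_1 \prec \cdots \prec h_n$, with $\|F_k(h_k)\Om - f_k^+\| < \eps$. By standard Haag-Ruelle theory and continuity of the out-state in its single-particle arguments,
\[
(f_1^+ \times \cdots \times f_n^+)_{\rm out}
=
\lim_{\eps \to 0}\ \lim_{t \to +\infty}\, F_1(h_{1,t,m(q_1)}) \cdots F_n(h_{n,t,m(q_n)}) \Om.
\]
The decisive observation is that $\supp \fti_{k,t,m(q_k)} = \supp \fti_k$ still lies in the forward light cone and $f_{k,t,m(q_k)}^+ = f_k^+$, so $\phi(f_{k,t,m(q_k)}) = \zd(f_k^+) = \phi(f_k)$ for every $t$; hence the product $\phi(f_{1,t,m(q_1)}) \cdots \phi(f_{n,t,m(q_n)}) \Om$ is $t$-independent and equals $\phi(f_1) \cdots \phi(f_n) \Om$. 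I then bound the difference between the two products as a telescoping sum in which one factor at a time is swapped, and estimate each summand using the $N$-bounds of Proposition 2.3 (iv), the bound $\|(\phi(f_k) - F_k(h_k))\Om\| < \eps$, and a clustering argument for commutators of operators with distinct indices as $t \to +\infty$.

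The main obstacle is this clustering step: unlike the quasi-local $F_k(h_{k,t,m(q_k)})$, the operators $\phi(f_{k,t,m(q_k)})$ are not affiliated to any bounded-region algebra, because $\supp \fti_k$ in the open forward light cone forces $f_k$ to be non-compactly supported. Asymptotic commutation of $\phi(f_{k,t,m(q_k)})$ with $F_l(h_{l,t,m(q_l)})$ for $k \neq l$ must therefore be obtained from the stationary-phase concentration of $f_{k,t,m(q_k)}$ around the classical trajectory determined by $\VV(f_k^+)$ and the corresponding concentration of $h_{l,t,m(q_l)}$, exploiting that $f_1 \prec \cdots \prec f_n$ separates these trajectories spacelike at rate $O(t)$. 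This is the multi-component analog of the scalar argument in \cite{Lechner:2008}; the antiparticle involution and matrix-valued $S$-matrix introduce no essential new difficulty provided the quasi-local approximants $F_k(h_k)$ are chosen to respect the charge sectors.
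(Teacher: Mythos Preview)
Your treatment of the second equality $\phi(f_1)\cdots\phi(f_n)\Om=\sqrt{n!}\,P_n(f_1^+\ot\cdots\ot f_n^+)$ is correct and coincides with the paper.

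For the first equality the paper takes a different and considerably simpler route that dissolves the obstacle you identify. Instead of a telescoping comparison of the full products $\phi(f_1)\cdots\phi(f_n)\Om$ and $F_1(h_{1,t})\cdots F_n(h_{n,t})\Om$, it proceeds by \emph{induction} on $n$: assuming the claim for $n$, one fixes $f\prec f_1\prec\cdots\prec f_n$ and shows, by pairing with an arbitrary $\Psi\in\DD$, that
\[
\phi(f_q)\,(\psi_1\times\cdots\times\psi_n)_{\rm out}
=
(f_q^+\times\psi_1\times\cdots\times\psi_n)_{\rm out}\,.
\]
The point is that only \emph{one} wedge-local factor, namely $\phi(\hat f_{t,q})$, ever has to be moved through the quasi-local approximants $G_{k,t}:=F_{k,q_k}(\hat h_{k,t})$. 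Because $f$ sits at the \emph{leftmost} velocity support, one has $\supp\hat f_{t,q}\subset W_L+t\,\VV^\eps(f_q)$ spacelike to all the localization regions $\OO+t\,\VV^\eps(h_k)$ of the $G_{k,t}$ for large $t$, so the commutators vanish \emph{exactly} by wedge locality, not merely asymptotically by clustering. Using Hepp's compactly supported approximants $\hat f_{t,q}$ here also removes your worry about non-compact spacetime support of $f$. The passage from the $\psi_k$ back to the $f_{k,q_k}^+$ is then handled at the very end by the continuous dependence of the out-states on their single-particle inputs together with the induction hypothesis, rather than being woven into the $\eps$-approximation.

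Your telescope, by contrast, forces you to control expressions of the form $\phi(f_1)\cdots\phi(f_{k-1})\,F_{k+1}(h_{k+1,t})\cdots F_n(h_{n,t})$ acting on the small single-particle vector $(\phi(f_k)-F_k(h_k))\Om$, uniformly in $t$; since $\|F_l(h_{l,t})\|$ is only polynomially bounded in $t$, this requires a separate convergence argument for $F_{k+1}(h_{k+1,t})\cdots F_n(h_{n,t})$ on one-particle vectors that the inductive proof never needs. Incidentally, the scalar argument in \cite{Lechner:2008} that you cite as a model is itself the inductive one, not a telescope.
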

\begin{proof}
	The proof follows closely the one of Lemma~6.1 in \cite{Lechner:2008} and proceeds by induction in $n$. We first consider $n=1$ and pick some charge $q\in\Q$. Then $\phi(f_ {1,q})\Om=f_{1,q}^+\in\Hil_{1,q}$ and $f_{1,q}^+=(f_{1,q}^+)_{\rm in}=(f_{1,q}^+)_{\rm out}$. Summing over $q\in\Q$, we find $\phi(f_1)\Om=f_1^+=(f_1^+)_{\rm out/in}$.

	For the induction step $n\to n+1$, we only consider the limit $t\to\infty$; the case $t\to-\infty$ is analogous. Fix charges $q,q_1,...,q_n$, and test functions  $f,f_1,...,f_n\in\Ss(\Rl^2)\ot\K$ with compact supports in momentum space around points on the mass shells with masses $m(q), m(q_1),...,m(q_n)$ such that $f\prec f_1\prec...\prec f_n$. In a first step, given $\eps>0$ we find field operators $F_1,...,F_n\in\F(\OO)$ such that $\|\psi_k-f_{k,q_k}^+\|<\eps$, where $\psi_k:=F_{k,q_k}(h_{k})\Om\in\Hil_{1,q_k}$.

	Note that the operators $G_{k,t}:=F_{k,q_k}(\hat{h}_{k,t,m(q_k)})$ satisfy $G_{1,t}\cdots G_{n,t}\Om\to(\psi_1\times...\times\psi_n)_{\rm out}$ as $t\to\infty$. This is the case because $F_{1,q_1}(h_{1,t,m(q_1)})\cdots F_{n,q_n}(h_{n,t,m(q_n)})\Om\to(\psi_1\times...\times\psi_n)_{\rm out}$ and $\hat{h}_{k,t,m(q_k)}-{h}_{k,t,m(q_k)}\to0$ rapidly in the topology of $\Ss(\Rl^2)$, whereas $\|h_{k,t,m(q_k)}\|_1$ is polynomially bounded in $t$.

	The operators $G_{k,t}$ are localized in $\OO+t\VV_{m(q_k)}(h_k)$, and the field $\phi(\fhat_{t,q})$ is localized in $W_L+t\VV_{m(q)}(f_q)$. Because of the ordering $f\prec f_1\prec ...\prec f_n$, these two regions are spacelike separated for sufficiently large $t$, and the $G_{k,t}$ commute with $\phi(\fhat_{t,q})$ on scattering states.

	To establish the claim, we now pick an arbitrary vector $\Psi\in\DD$ and note that the field operator $\phi(f_q)$ satisfies $\lim_{t\to\infty}\phi(\fhat_{t,q})^*\Psi=\phi(f_q)^*\Psi$; again because $\fhat_{t,q}-f_{t,q}\to0$ in $\Ss(\Rl^2)\ot\K_q$, and $\phi(f_{t,q})^*\Psi$ is independent of $t$. Thus we can compute
	\begin{align*}
		\langle\phi(f_q)^*\Psi,\,(\psi_1\times...\times\psi_n)_{\rm out}\rangle
		&=
		\lim_{t\to\infty}
		\langle\phi(\fhat_{t,q})^*\Psi,\,
		G_{1,t}\cdots G_{n,t}\Om\rangle
		\\
		&=
		\lim_{t\to\infty}
		\langle\Psi,\,
		G_{1,t}\cdots G_{n,t}\phi(\fhat_{t,q})\Om\rangle
		\\
		&=
		\lim_{t\to\infty}
		\langle\Psi,\,
		G_{1,t}\cdots G_{n,t} f_q^+\rangle
		\\
		&=
		\langle\Psi,\,
		(\psi_1\times...\times \psi_n\times f_q^+)_{\rm out}\rangle
		\,,
	\end{align*}
	where in the last step, we used the convergence of the $G_{k,t}$ to the asymptotic creation operators. Taking into account that $\DD\subset\Hil$ is dense, and the Bose symmetry of the scattering states, we arrive at
	\begin{align*}
		\phi(f_q)(\psi_1\times...\times\psi_n)_{\rm out}
		&=
		( f_q^+\times\psi_1\times...\times \psi_n)_{\rm out}
		\,.
	\end{align*}
	Proceeding from the $\psi_k$ to the $f_{k,q_k}^+$, we find in view of the continuous dependence of multi-particle scattering states on their single particle components, and the induction hypothesis
	\begin{align*}
		\phi(f_q)\phi(f_{1,q_1})\cdots\phi(f_{n,q_n})\Om
		=
		\phi(f_q)(f_{1,q_1}^+\times...\times f_{n,q_n}^+)_{\rm out}
		&=
		(f_q^+\times f_{1,q_1}^+\times...\times f_{n,q_n}^+)_{\rm out}
		\,.
	\end{align*}
	Taking linear combinations over $q,q_k$ gives the claimed result. The second equation in \eqref{eq:ScatteringStatesOut} holds by definition of $\phi$ and the fact that only the creation parts of the fields contribute to this vector because of $f_k^-=0$, $k=1,\ldots,n$.
\end{proof}

\begin{proposition}
		The sets of incoming and outgoing $n$-particle collision states constructed in Proposition~\ref{Proposition:ScatteringStates} are total sets in $\Hil_n$, i.e., the model is asymptotically complete.
\end{proposition}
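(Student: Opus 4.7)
The plan is to reduce the density assertion to an $L^2$-question on a single Weyl chamber and then combine the density of the wave functions $f^+$ in $\Hil_1$ with a partition-of-unity argument. As a starting observation, the $S$-symmetry \eqref{eq:SSymmetry} together with the unitarity of $S(\te)$ on $\K\otimes\K$ makes the pointwise norm $\|\Psi_n(\bte)\|_{\K^{\otimes n}}$ symmetric under permutations of $\bte$ for every $\Psi_n\in\Hil_n$. Consequently, the restriction map $\Psi_n\mapsto\sqrt{n!}\,\Psi_n|_C$, where
\begin{equation*}
C:=\{\bte\in\Rl^n:\te_1<\te_2<\cdots<\te_n\}
\end{equation*}
is the standard Weyl chamber, is an isometric isomorphism $\Hil_n\to L^2(C,d^n\bte)\otimes\K^{\otimes n}$. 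Hence density in $\Hil_n$ is equivalent to density of the restrictions to $C$.

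Next I would compute the restriction of an outgoing scattering state $\sqrt{n!}\,P_n(f_1^+\otimes\cdots\otimes f_n^+)$ with $f_1\prec\cdots\prec f_n$ to $C$. Since rapidity and velocity are monotonically related, the rapidity supports $I_k$ of the $f_k^+$ satisfy $I_1<I_2<\cdots<I_n$, so that $f_1^+\otimes\cdots\otimes f_n^+$ is supported in $I_1\times\cdots\times I_n\subset C$. For any $\pi\in\frS_n\setminus\{\mathrm{id}\}$, the vector $D_n(\pi)(f_1^+\otimes\cdots\otimes f_n^+)$ has support in a chamber disjoint from $C$. Thus only the identity term of $P_n=\tfrac{1}{n!}\sum_\pi D_n(\pi)$ survives on $C$, and the restriction reduces to $\tfrac{1}{\sqrt{n!}}\,f_1^+(\te_1)\otimes\cdots\otimes f_n^+(\te_n)$.

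It then remains to show that such ordered products span a dense subspace of $L^2(C)\otimes\K^{\otimes n}$. Two ingredients should do it. First, for each charge $q\in\Q$ the set $\{f_q^+:f_q\in\Ss(\Rl^2)\otimes\K_q,\ \supp\fti_q\subset V_+\}$ is dense in $\Hil_{1,q}$, since any Schwartz rapidity function on the forward mass shell $H^+_{m(q)}$ arises from some $\fti_q$ supported in a tubular neighborhood of $H^+_{m(q)}$ inside the forward light cone $V_+$. Second, given a target $\Psi\in L^2(C)\otimes\K^{\otimes n}$, extended by zero to $\Rl^n$, I would approximate in $L^2(\Rl^n)\otimes\K^{\otimes n}$ by finite sums of pure tensors $g_1\otimes\cdots\otimes g_n$, and then refine each $g_k$ via a sufficiently fine partition of unity on the rapidity line. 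Only those summands with strictly increasing rapidity supports contribute to the $L^2(C)$-norm, and by the first ingredient each such summand can be realized (up to small error) by a product $f_{1,\alpha_1}^+\otimes\cdots\otimes f_{n,\alpha_n}^+$ with $\supp\fti_{k,\alpha_k}\subset V_+$ and preserved strict ordering.

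I expect the main obstacle to be this last partition-of-unity step: one must verify that the decomposition simultaneously respects the $V_+$-support condition (needed for the identification with scattering states via Proposition~\ref{Proposition:ScatteringStates}) and produces \emph{strictly} rather than merely non-strictly ordered velocity supports. Both points are standard in the Haag-Ruelle framework — a sufficiently fine partition combined with smooth Schwartz momentum cut-offs compatible with $V_+$ — but some care is required to combine them cleanly. The incoming case is treated in exactly the same way, with reversed factor order in $\phi(f_n)\cdots\phi(f_1)\Om$.
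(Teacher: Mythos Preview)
Your proposal is correct and follows essentially the same route as the paper. Both arguments identify $\Hil_n$ with $L^2$ of the ordered rapidity simplex tensored with $\K^{\ot n}$, observe that the ordering $f_1\prec\cdots\prec f_n$ corresponds (via the monotonicity of $\tanh$) to strictly ordered rapidity supports, and then invoke density of such ordered tensor products in $L^2$ of the simplex. The only cosmetic difference is the direction of the identification: you use the restriction map $\Psi_n\mapsto\sqrt{n!}\,\Psi_n|_C$ as an isometric isomorphism $\Hil_n\to L^2(C)\ot\K^{\ot n}$, whereas the paper uses the surjectivity of $P_n:L^2(E_n)\ot\K^{\ot n}\to\Hil_n$; these are inverse to one another up to the measure-zero boundary. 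Your worry about the partition-of-unity step is not a real obstacle --- the paper simply asserts the density of ordered products in $L^2(E_n)\ot\K^{\ot n}$ without further comment, as this is standard.
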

\begin{proof}
	When $f\in\Ss(\Rl^2)$ varies over all functions whose Fourier transforms have compact support within the forward light cone, $f^+$ ranges over a dense set in $\Hil_1$. So the statement is true for $n=1$. For higher $n$, the ordering $f_1\prec...\prec f_n$ has to be taken into account. 	As the asymptotic states \eqref{eq:ScatteringStatesOut} depend only on the mass shell restrictions of the $f_1,...,f_n$, we can change these functions off the mass shell in such a way that $\VV_m(f_k)=\{(1,p_1(p_1^2+m^2)^{-1/2})\,:\,(p_0,p_1)\in\supp\fti_k\cap H^+_m\}$, without changing the scattering states. Parametrizing the mass shell $H^+_m$ by the rapidity according to $p_m(\te)=(m\ch\te,m\sh\te)$ then shows $\VV_{m_{[\alpha]}}(f_{k,\alpha})=\{(1,\tanh\te)\,:\,\te\in\supp f_{k,\alpha}^+\}$. But $\tanh$ is a strictly monotonously increasing function, and thus $f_1\prec f_2$ is equivalent to $\supp f_2^+-\supp f_1^+\subset\Rl^+$. Hence the $f_1^+\ot...\ot f_n^+$ span a dense set in $L^2(E_n)\ot\K^{\ot n}$, where $E_n:=\{(\
te_1,...,\te_n)\in\Rl^n\,:\,\te_1\leq...\leq\te_n\}$. But when $L^2(E_n)\ot\K^{\ot n}\subset L^2(\Rl^n)\ot\K^{\ot n}$ is considered as a subspace by continuing the functions on $E_n$ by zero to $\Rl^n$, the $S$-symmetrization projection $P_n:L^2(E_n)\ot\K^{\ot n}\to P_n(L^2(\Rl^n)\ot\K^{\ot n})=\Hil_n$ is a continuous map and onto. This shows that the $n$-particle collision states form a dense set in the $n$-particle space, and since $n$ was arbitrary, asymptotic completeness follows.
\end{proof}

Having determined the form of the collision states, we will now compute the S-matrix ${\rm S}$, considered as an operator on the totally symmetrized Bose Fock space $\Hil^+=\bigoplus_{n=0}^\infty\Hil_n^+$ over $\Hil_1$. According to our above construction of scattering states, the M\o ller operators $W_{\rm in/out}:\Hil^+\to\Hil$ have the form
\begin{align}
	\label{eq:MollerOut}
	W_{\rm out} P_n^+(f_1^+\ot...\ot f_n^+)
	&=
	P_n(f_1^+\ot...\ot f_n^+)
	\,,\\
	W_{\rm in} P_n^+(f_n^+\ot...\ot f_1^+)
	&=
	P_n(f_n^+\ot...\ot f_1^+)
	\,,
	\label{eq:MollerIn}
\end{align}
where $f_1\prec...\prec f_n$ and $P_n^+$ denotes the total symmetrization, given by $S=1$. These are well-defined linear operators with dense domains and ranges which extend to unitaries since the norms of $P_n^+(f_1^+\ot...\ot f_n^+)$ and $P_n(f_1^+\ot...\ot f_n^+)$ coincide as a consequence of the ordering of the supports of the $f_k^+$. The S-matrix is the product of these M\o ller operators,
\begin{align}\label{eq:SMatrix}
	{\rm S}:=W_{\rm out}^*W_{\rm in}:\Hil^+\to\Hil^+\,.
\end{align}

\begin{theorem}\label{Theorem:SMatrix}
	Assume that the vacuum vector $\Om$ is cyclic for the field algebra $\F(\OO)$ for some double cone $\OO$. Then the constructed model solves the inverse scattering problem for the $S$-matrix, i.e. its scattering operator ${\rm S}$ reproduces $S$. More precisely, ${\rm S}$ \eqref{eq:SMatrix} acts as
	\begin{align}
		({\rm S}\Psi^+)_n(\bte)
		=
		{\rm S}_n(\bte)\Psi^+_n(\bte)
		\,,\qquad
		\Psi^+\in\Hil^+\,,
	\end{align}
	and the tensors ${\rm S}_n$ are given by
	\begin{align}\label{eq:Sn}
		{\rm S}_n(\bte)^{\alpha_1...\alpha_n}_{\beta_1...\beta_n}
		&=
		\{S_n^\iota(\te_{\pi(1)},...,\te_{\pi(n)})^{\alpha_{\pi(1)}...\alpha_{\pi(n)}}_{\beta_{\pi(n)}...\beta_{\pi(1)}}\,:\,\te_{\pi(1)}\leq...\leq\te_{\pi(n)}\}
		\,,
	\end{align}
	where $\iota\in\frS_n$ is the inversion permutation $\iota(k):=n+1-k$ and $S_n^\iota$ is defined in \eqref{eq:SPi}. Explicitly, for $n=2$,
	\begin{align}\label{eq:S2}
		{\rm S}_2(\te_1,\te_2)^{\alpha_1\alpha_2}_{\beta_1\beta_2}
		&=
		\left\{
		\begin{array}{rcl}
			S^{\alpha_1\alpha_2}_{\beta_2\beta_1}(\te_2-\te_1) &;& \te_1\leq\te_2
			\\
			S^{\alpha_2\alpha_1}_{\beta_1\beta_2}(\te_1-\te_2) &;& \te_2<\te_1
		\end{array}
		\right.\,.
	\end{align}
\end{theorem}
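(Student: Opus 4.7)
The plan is to reduce everything to the fundamental chamber
$E_n:=\{\bte\in\Rl^n:\te_1\leq\cdots\leq\te_n\}$. A vector in $\Hil_n$ is uniquely determined by its restriction to $E_n$ via the $S$-symmetry relation \eqref{eq:SSymmetry}, and likewise a vector in $\Hil_n^+$ is determined by its restriction there via Bose symmetry; both restriction maps are, up to the common factor $(n!)^{-1/2}$, isometries onto $L^2(E_n)\otimes\K^{\ot n}$. In particular, the Møller operators $W_{\mathrm{out}}$ and $W_{\mathrm{in}}$ are fully encoded by their action on chamber restrictions, and it will suffice to read off ${\rm S}_n(\bte)$ on $E_n$ and to extend to arbitrary $\bte$ by Bose symmetry afterwards.

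I would start with $\Psi^+:=P_n^+(f_1^+\ot\cdots\ot f_n^+)$ for test functions $f_1,\ldots,f_n\in\Ss(\Rl^2)\ot\K$ whose Fourier transforms have compact supports inside the forward light cone and satisfy $f_1\prec\cdots\prec f_n$; such vectors span a dense subspace of $\Hil_n^+$ by the same argument used in the proof of asymptotic completeness, so proving the claim on them is enough. On $E_n$, the rapidity supports of the components $(f_k^+)^{\alpha}$ are disjoint and increasingly ordered, which forces the sum $P_n=\tfrac{1}{n!}\sum_{\pi\in\frS_n}D_n(\pi)$ to collapse to a single term in each relevant scattering state: for the out-state $P_n(f_1^+\ot\cdots\ot f_n^+)$ only $\pi=\mathrm{id}$ contributes (so $S_n^\pi=1_n$), while for the in-state $P_n(f_n^+\ot\cdots\ot f_1^+)$ only $\pi=\iota$ does. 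Using \eqref{eq:SPi} together with Proposition~\ref{Proposition:ScatteringStates} yields, for $\bte\in E_n$,
\begin{align*}
	[W_{\mathrm{out}}\Psi^+]^\balpha(\bte)
	&=\tfrac{1}{n!}\prod_{k=1}^n(f_k^+)^{\alpha_k}(\te_k)=\Psi^{+,\balpha}(\bte),\\
	[W_{\mathrm{in}}\Psi^+]^\balpha(\bte)
	&=\tfrac{1}{n!}\,S_n^\iota(\bte)^\balpha_{\beta_n\cdots\beta_1}\,\prod_{k=1}^n(f_k^+)^{\beta_k}(\te_k)
	=S_n^\iota(\bte)^\balpha_{\beta_n\cdots\beta_1}\,\Psi^{+,\bbeta}(\bte).
\end{align*}
Since $W_{\mathrm{out}}^*$ is the inverse of $W_{\mathrm{out}}$ and therefore also preserves chamber restrictions, applying it produces $({\rm S}\Psi^+)^\balpha(\bte)=S_n^\iota(\bte)^\balpha_{\beta_n\cdots\beta_1}\Psi^{+,\bbeta}(\bte)$ on $E_n$, which is precisely \eqref{eq:Sn} specialised to $\pi=\mathrm{id}$.

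For arbitrary $\bte\in\Rl^n$, I would let $\pi=\pi_\bte\in\frS_n$ be the sorting permutation, put $\bte':=(\te_{\pi(1)},\ldots,\te_{\pi(n)})\in E_n$, and apply the chamber result to the permuted indices $(\alpha_{\pi(1)},\ldots,\alpha_{\pi(n)})$. Translating both sides back from $\bte'$ to $\bte$ using Bose symmetry of $\Psi^+$ and of ${\rm S}\Psi^+$ yields exactly the stated formula \eqref{eq:Sn}. The case $n=2$ of \eqref{eq:S2} is the immediate specialisation with $\iota=\tau_1$, in which case $S_2^\iota(\te_1,\te_2)=S(\te_2-\te_1)$ with the indicated placement of indices, and the second case $\te_2<\te_1$ is obtained from the first by Bose symmetry.

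The principal obstacle I anticipate is the index bookkeeping: one must carefully track the reversal of the lower indices, the interplay between the $S$-symmetry on $\Hil$ and the Bose symmetry on $\Hil^+$ through the Møller operators, and the combinatorics that lifts the chamber identity to the general formula via $\pi_\bte$. The analytic content of the argument is very light once the chamber-restriction perspective has been adopted; the subtlety lies almost entirely in keeping the tensor indices and normalisations straight.
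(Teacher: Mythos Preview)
Your proposal is correct and follows essentially the same strategy as the paper: the key step in both is that the ordering $f_1\prec\cdots\prec f_n$ collapses the symmetrization sums $P_n$ and $P_n^+$ to single terms. The only organisational difference is that you work first on the fundamental chamber $E_n$ (where the surviving permutation is $\mathrm{id}$ for the out-state and $\iota$ for the in-state) and then extend by Bose symmetry, whereas the paper works at a general point $\bte$ from the start by introducing the flip tensors $F_n^\pi$ for the sorting permutation $\pi$ and using the representation identity $S_n^{\pi\iota}(\bte)=S_n^\pi(\bte)S_n^\iota(\te_{\pi(1)},\ldots,\te_{\pi(n)})$; the two presentations are equivalent repackagings of the same computation.
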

\begin{proof}
	Let $\bte\in\Rl^n$. Then there exists a permutation $\pi\in\frS_n$ such that $\te_{\pi(1)}\leq...\leq\te_{\pi(n)}$. In view of the ordering $f_1\prec...\prec f_n$, we have
	\begin{align}
		P_n(f_1^+\ot...\ot f_n^+)(\bte)
		&=
		\frac{1}{n!}S_n^\pi(\bte)\,f_1^+(\te_{\pi(1)})\ot...\ot f_n^+(\te_{\pi(n)})
		\,,\\
		P_n^+(f_1^+\ot...\ot f_n^+)(\bte)
		&=
		\frac{1}{n!}F_n^\pi\,f_1^+(\te_{\pi(1)})\ot...\ot f_n^+(\te_{\pi(n)})\,,
	\end{align}
	with the $\theta$-independent $S=F$ (flip) in the second line. Thus $W_{\rm out}$ \eqref{eq:MollerOut} takes the form
	\begin{align*}
		(W_{\rm out}\Psi^+)_n(\bte)
		=
		W_{{\rm out},n}(\bte)\Psi_n^+(\bte)
		\,,\qquad
		W_{{\rm out},n}(\bte)=\{S_n^\pi(\bte)(F_n^\pi)^{-1}\,:\,\te_{\pi(1)}\leq...\leq\te_{\pi(n)}\}
		\,.
	\end{align*}
	Similarly,
	\begin{align*}
		(W_{\rm in}\Psi^+)_n(\bte)
		=
		W_{{\rm in},n}(\bte)\Psi_n^+(\bte)
		\,,\qquad
		W_{{\rm in},n}(\bte)=\{S_n^{\pi\iota}(\bte)(F_n^{\pi\iota})^{-1}\,:\,\te_{\pi(1)}\leq...\leq\te_{\pi(n)}\}
		\,,
	\end{align*}
	where $\iota(k):=n+1-k$ is the total inversion permutation. This implies that ${\rm S}$ acts as
	\begin{align*}
		({\rm S}\Psi^+)_n(\bte)
		=
		{\rm S}_n(\bte)\Psi_n^+(\bte)
		\,,\qquad
		{\rm S}_n(\bte)
		=
		\{F_n^\pi S_n^\pi(\bte)^{-1}S_n^{\pi\iota}(\bte)(F_n^{\pi\iota})^{-1}\,:\,\te_{\pi(1)}\leq...\leq\te_{\pi(n)}\}
		\,.
	\end{align*}
	The $F$-tensors have the components $(F_n^\pi)^\balpha_\bbeta=\delta^{\alpha_{\pi(1)}}_{\beta_1}\cdots\delta^{\alpha_{\pi(n)}}_{\beta_n}$ and form a representation of $\frS_n$. Furthermore, since $D_n$ is a representation of $\frS_n$, one has
	\begin{align*}
		S_n^{\pi\iota}(\bte)=S_n^\pi(\bte)S_n^\iota(\te_{\pi(1)},...,\te_{\pi(n)})
		\,.
	\end{align*}
	Combining these two equations with the above formula for ${\rm S}_n$ gives \eqref{eq:Sn}. For $n=2$, we have $\iota=\tau_1$ and $S_2^\iota(\te_1,\te_2)=S(\te_2-\te_1)$, which gives \eqref{eq:S2}.
\end{proof}

Explicitly, $S_n^\iota$ is a product of $\frac{1}{2}n(n-1)$ factors of $S$, corresponding to $\frac{1}{2}n(n-1)$ consecutive two-body collisions in an $n\to n$ process. As there is no particle production, Theorem \ref{Theorem:SMatrix} shows that the constructed model has the factorizing S-matrix ${\rm S}$ based on $S$ as its scattering operator (provided it contains local observables). Often times the S-matrix is also expressed by scalar products between improper asymptotic states of sharp rapidity. We note that in case of a parity invariant $S$, i.e. $S^{\alpha\beta}_{\gamma\delta}(\te)=S^{\beta\alpha}_{\delta\gamma}(\te)$, we obtain the more familiar formula (see for example \cite{AbdallaAbdallaRothe:1991})
\begin{align*}
	{}_{\rm out}\langle \te_1,\alpha_1;\te_2,\alpha_2|\te_1',\beta_1;\te_2',\beta_2\rangle_{\rm in}
	&=
	\delta(\te_1-\te_1')\delta(\te_2-\te_2')\cdot S^{\alpha_1\alpha_2}_{\beta_2\beta_1}(|\te_1-\te_2|)
	\\
	&\quad+
	\delta(\te_1-\te_2')\delta(\te_2-\te_1')\cdot S^{\alpha_1\alpha_2}_{\beta_1\beta_2}(|\te_1-\te_2|)
	\,.
\end{align*}

\section{Examples of ${S}$-matrices}\label{section:examples}

The construction presented so far was based on an arbitrary $S$-matrix satisfying the assumptions collected in Definition~\ref{definition:SMatrix}; an explicit form of $S$ was not needed. In this section, we will complement the abstract analysis by providing some concrete examples of $S$-matrices.

In the approach taken here, any $S$-matrix defines a model. In the Lagrangian approach to quantum field theory, on the other hand, a model is specified in terms of Lagrangian. A connection between the two approaches can be made whenever the exact S-matrix of some integrable model is available in the Lagrangian setting. This is the case for many models, where $S$ can be obtained by exploiting conservation laws which are assumed to also be present in the quantum theory, comparison with perturbative results, and analyticity assumptions, see  \cite{AbdallaAbdallaRothe:1991,Mussardo:1992,Dorey:1998} and the references cited therein. Below we will see examples of such models.

The simplest class of $S$-matrices is the scalar one, where $\K=\Cl$ and the mass spectrum consists of just a single mass $m>0$. This is the setting of theories containing only a single species of neutral massive particles, and has previously been worked out in \cite{Lechner:2003}. In this case the constraints on $S$ imposed by Definition~\ref{definition:SMatrix} simplify drastically, and it is possible to derive the most general form of $S\in\SF$ explicitly \cite{Lechner:2006}. One finds that the --- here scalar-valued --- $S$-matrix takes the form
\begin{align}\label{eq:ScalarS}
	S(\te)
	=
	\eps\, e^{ia \sinh\te}\,\prod_k\frac{\sinh\beta_k -\sinh\te}{\sinh\beta_k +\sinh\te}
	\,,
\end{align}
where $\eps=\pm1$, $a\geq0$, and the $\beta_k$ form finite or infinite sequences of complex numbers with $0<{\rm Im}\beta_k \leq \frac{\pi}{2}$, subject to certain symmetry and summability conditions \cite[Prop.~3.2.2]{Lechner:2006} which imply the properties Def.~\ref{definition:SMatrix} \refitem{item:Unitarity}, \refitem{item:HermitianAnalyticity}, \refitem{item:Crossing}, and convergence of the product. In the scalar case, the $S$-matrix is also referred to as {\em scattering function}.

This class of scattering functions $S$-matrices contains in particular the function
\begin{align*}
	S_{{\rm ShG}(g)}(\te)
	=
	\frac{\sinh\te-i\sin\frac{\pi g^2}{4\pi +g^2}}{\sinh\te+i\sin\frac{\pi g^2}{4\pi +g^2}}
	\,,
\end{align*}
where $g$ is a real parameter. This function is expected to be the exact scattering function of the Sinh-Gordon model with coupling $g$  \cite{ArinshteinFateevZamolodchikov:1979, BradenSasaki:1991}. It belongs to the subset of {\em regular} scattering functions, defined as the ones with $a=0$ and finite sequences $\{\beta_k\}$. For such $S$, it is also known that the modular nuclearity condition holds, and hence the vacuum vector is cyclic for double cone algebras \cite{Lechner:2008}. Thus the assumptions about the modular nuclearity condition made in Sections \ref{section:operator-algebras} and \ref{section:scattering} are satisfied, and the full S-matrix can be computed as
\begin{align*}
	({\rm S}\Psi^+)_n(\bte)
	=
	\prod_{1\leq l<r\leq n}S(|\te_l-\te_r|)\cdot\Psi_n^+(\bte)
	\,,\qquad
	\Psi^+\in\Hil^+\,.
\end{align*}

In the matrix-valued case with $\dim\K>1$, a simple class of $S$-matrices are so-called diagonal solutions (see also \cite{Jimbo:1986, LiguoriMintchev:1995} for similar $S$-matrices arising in the context of Toda systems). In these examples, one considers a spectrum of $N$ neutral particles of the same mass, that is, puts $\K=\Cl^N$ with some $N\in\Nl$, conjugation $\overline\alpha=\alpha$, and masses $m_\alpha=m$, $\alpha\in\{1,...,N\}$. The $S$-matrix is defined as
\begin{align}\label{eq:DiagonalS}
	S(\te)^{\alpha\beta}_{\gamma\delta}
	:=
	\sigma_{\alpha\beta}(\te)\delta^\alpha_\delta \delta^\beta_\gamma
	\,,
\end{align}
(no sum over $\alpha,\beta$), and thus $S(\te)^{\alpha\beta}_{\gamma\delta}=S(\te)^{\beta\alpha}_{\gamma\delta}$ can be regarded as a diagonal $(N^2\times N^2)$-matrix. The functions $\sigma_{\alpha\beta}$ appearing here are assumed to be continuous bounded functions on $\overline{\Strip(0,\pi)}\to\Cl$ which are analytic in $\Strip(0,\pi)$.

Because of this analyticity, it is clear that $\te\mapsto S(\te)$ has the analytic properties required in Definition~\ref{definition:SMatrix}, and because of the diagonal form \eqref{eq:DiagonalS}, $S$ satisfies items \refitem{item:YangBaxterEqn}, \refitem{item:TCPInvariance} and \refitem{item:TranslationInvariance} of that definition without further constraints on the $\sigma_{\alpha\beta}$. To implement unitarity, hermitian analyticity and crossing symmetry of $S$, one has to require, $\te\in\Rl$, $\alpha,\beta\in\{1,...,N\}$,
\begin{align}
	\overline{\sigma_{\alpha\beta}(\te)}
	&=
	\sigma_{\alpha\beta}(\te)^{-1}
	=
	\sigma_{\beta\alpha}(-\te)
	=
	\sigma_{\beta\alpha}(i\pi+\te)
	\,.
\end{align}
With these constraints on $\sigma_{\alpha \beta}$, it is easy to verify that $S$ as defined in \eqref{eq:DiagonalS} complies with all requirements of Definition~\ref{definition:SMatrix}. We don't give the most general form of the functions $\sigma_{\alpha\beta}$ here, but content ourselves with pointing out that particular examples arise when $\sigma_{\alpha\beta}=\sigma_{\beta\alpha}$ are scalar $S$-matrices of the form \eqref{eq:ScalarS}.
\\
\\
A class of more involved $S$-matrices is given by the scattering operators of $O(N)$ $\sigma$-models, $N\geq3$, see \cite{AbdallaAbdallaRothe:1991, Ketov:2000} for general literature and \cite{BabujianFoersterKarowski:2012} for recent results on the formfactors of these models. These models are defined by quantization of a field theory of $N$ scalar fields $\varphi_1,...,\varphi_N$ whose dynamics is governed by the interaction-free Lagrangian in the presence of the spherical constraint $\sum_{k=1}^N\varphi_k(x)^2=1$. This constraint gives rise to a non-linear field equation, and it turns out that the corresponding field theory is perturbatively renormalizable in $d=1+1$ dimensions and exhibits an infinite number of conservation laws. By making an ansatz exploiting the $O(N)$-symmetry, the factorizing S-matrix of this model can been determined \cite{ZamolodchikovZamolodchikov:1978} (see also further references in \cite{AbdallaAbdallaRothe:1991}, and \cite{ShankarWitten:1978} for a supersymmetric extension).

There has been a lot of interest in non-linear $\sigma$-models in two dimensions because of their similarities to non-Abelian gauge theories in four dimensions, in particular regarding their geometric nature, asymptotic freedom, and instanton solutions. We want to show here that these models fit precisely into the present framework of inverse scattering theory, and thus define them in terms of their $S$-matrix.

In our setting, the $O(N)$ $\sigma$-models can be described as follows. The particle spectrum consists of a single species of neutral particles of mass $m>0$ with an internal degree of freedom transforming under $G=O(N)$, which acts on $\K:=\Cl^N$ by its defining self-conjugate irreducible representation, i.e. in particular $\overline{\alpha}=\alpha$, $\alpha=1,...,N$. The $S$-matrix is defined as \cite[Chapter~8.3.2]{AbdallaAbdallaRothe:1991}
\begin{align}\label{eq:SSigmaModel}
	S_{\sigma,N}(\te)^{\alpha_1\alpha_2}_{\beta_1\beta_2}
	:=
	\sigma_1(\theta)\delta^{\alpha_1\alpha_2}\delta^{\beta_1\beta_2}
	+
	\sigma_2(\theta)\delta^{\alpha_1\beta_2}\delta^{\alpha_2\beta_1}
	+
	\sigma_3(\theta)\delta^{\alpha_1\beta_1}\delta^{\alpha_2\beta_2}
	\,,
\end{align}
with the functions
\begin{align}
	\sigma_2(\theta)
	&:=
	Q(\theta) Q(i\pi-\theta),
	\quad \text{with} \quad
	Q(\theta)
	:=
	\frac{\Gamma(\frac{1}{N-2}-i\frac{\theta}{2\pi})\Gamma(\frac{1}{2}-i\frac{\theta}{2\pi})}
	{\Gamma(\frac{1}{2}+\frac{1}{N-2}-i\frac{\theta}{2\pi})\Gamma(-i\frac{\theta}{2\pi})}\label{eq:sigma2}
	  ,\\
	  \sigma_1(\theta)
	  &:=
	  -\frac{2\pi i}{(N-2)}\,\frac{\sigma_2(\theta)}{i\pi-\theta}
	  \label{eq:sigma1},
	  \\
	  \sigma_3(\theta)
	  &:=
	  \sigma_1(i\pi-\theta)= -\frac{2\pi i}{(N-2)}\,\frac{\sigma_2(\theta)}{\theta}
	  \label{eq:sigma3}.
\end{align}
\begin{proposition}
	The $S$-matrix $S_{\sigma,N}$ defined in \eqref{eq:SSigmaModel}--\eqref{eq:sigma3} complies with Definition~\ref{definition:SMatrix} for the particle spectrum given by $G=O(N)$, $V_1={\rm id}$, $m>0$.
\end{proposition}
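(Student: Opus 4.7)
My approach is to exploit the $O(N)$-invariance by decomposing $S_{\sigma,N}(\theta)$ spectrally on $\K\ot\K$. The three invariant tensors $I^{\alpha\beta}_{\gamma\delta}=\delta^{\alpha\gamma}\delta^{\beta\delta}$, $T^{\alpha\beta}_{\gamma\delta}=\delta^{\alpha\delta}\delta^{\beta\gamma}$, and $E^{\alpha\beta}_{\gamma\delta}=\delta^{\alpha\beta}\delta^{\gamma\delta}$ appearing in \eqref{eq:SSigmaModel} obey $T^2=I$, $TE=ET=E$, $E^2=NE$, and recombine into the orthogonal projectors $P_0:=N^{-1}E$ (singlet), $P_a:=\tfrac12(I-T)$ (antisymmetric), $P_s:=\tfrac12(I+T)-P_0$ (symmetric traceless) onto the three irreducible $O(N)$-subrepresentations of $\K\ot\K$. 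A short linear calculation then yields the spectral form $S_{\sigma,N}(\theta)=s_0(\theta)P_0+s_s(\theta)P_s+s_a(\theta)P_a$ with $s_0=N\sigma_1+\sigma_2+\sigma_3$, $s_s=\sigma_2+\sigma_3$, $s_a=\sigma_3-\sigma_2$. In this form the group-theoretic items of Definition~\ref{definition:SMatrix} are immediate: translational invariance \refitem{item:TranslationInvariance} is vacuous since all masses equal $m$; gauge invariance \refitem{item:GaugeInvariance} holds because each $P_i$ is an $O(N)$-intertwiner; and TCP invariance \refitem{item:TCPInvariance}, with $\overline\alpha=\alpha$, follows from the manifest invariance of $I,T,E$ under $(\alpha\beta;\gamma\delta)\mapsto(\delta\gamma;\beta\alpha)$. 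Matching tensor coefficients, crossing symmetry \refitem{item:Crossing} reduces to the three identities $\sigma_1(i\pi-\theta)=\sigma_3(\theta)$, $\sigma_3(i\pi-\theta)=\sigma_1(\theta)$, $\sigma_2(i\pi-\theta)=\sigma_2(\theta)$, all built into \eqref{eq:sigma2}--\eqref{eq:sigma3}.

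Next I would establish analyticity, boundedness, and the Yang--Baxter equation. The only Gamma factor in $Q(\theta)$ that can produce a pole in the closed strip is $\Gamma(-i\theta/2\pi)$ in the denominator, with a simple pole at $\theta=0$ and all further poles at $\theta=-2\pi ik$, $k\geq1$, safely below the strip; the three numerator Gamma factors are regular in $\overline{\Strip(0,\pi)}$, and Stirling's asymptotics delivers boundedness along horizontal lines. Hence $\sigma_2=Q(\theta)Q(i\pi-\theta)$ is analytic and bounded on $\overline{\Strip(0,\pi)}$ with simple zeros exactly at $\theta=0$ and $\theta=i\pi$, which absorb the $1/\theta$ in $\sigma_3$ and the $1/(i\pi-\theta)$ in $\sigma_1$, so that $S_{\sigma,N}$ has the regularity demanded by Definition~\ref{definition:SMatrix}. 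The Yang--Baxter equation \refitem{item:YangBaxterEqn}, expanded via the multiplication table of $I,T,E$, collapses to a pair of homogeneous rational identities between the ratios $\sigma_1/\sigma_2$ and $\sigma_3/\sigma_2$ --- the classical $O(N)$ constraint identified in \cite{ZamolodchikovZamolodchikov:1978} --- and direct substitution shows that the specific choices in \eqref{eq:sigma1}, \eqref{eq:sigma3} solve them.

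The remaining and most delicate point is unitarity \refitem{item:Unitarity}, which in spectral form reduces to $|s_0(\theta)|=|s_s(\theta)|=|s_a(\theta)|=1$ on the real line; hermitian analyticity \refitem{item:HermitianAnalyticity} then follows by combining this with $\overline{S_{\sigma,N}(\theta)}=S_{\sigma,N}(-\theta)$, which is visible from the definitions. Inserting the explicit formulas, each $|s_i(\theta)|^2$ becomes a ratio of Gamma-function moduli that I would evaluate using Euler's reflection formula $\Gamma(z)\Gamma(1-z)=\pi/\sin\pi z$ (and its variant $\Gamma(z)\Gamma(-z)=-\pi/(z\sin\pi z)$) applied pairwise to the eight Gamma factors in $Q(\theta)Q(-\theta)$, together with the elementary identity $|(N-2)\theta\pm 2\pi i|^2=(N-2)^2\theta^2+4\pi^2$ that handles the rational prefactors of $\sigma_1,\sigma_3$. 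The Gamma-function bookkeeping is where the specific parameter choices $1/(N-2)$ and $1/2$ in \eqref{eq:sigma2} pay off, and this is the main technical obstacle in the proof; no new conceptual input beyond standard reflection and duplication identities is required.
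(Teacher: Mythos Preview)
Your proposal is correct and covers all items of Definition~\ref{definition:SMatrix}, but it is organized differently from the paper's proof. The paper works directly in the $\{I,T,E\}$ basis: it verifies the analyticity and boundedness of $Q$ essentially as you do (noting the simple zero of $Q$ at $\te=0$ that absorbs the apparent poles of $\sigma_1,\sigma_3$), reads off TCP invariance from the two index symmetries $S^{\alpha_1\alpha_2}_{\beta_1\beta_2}=S^{\alpha_2\alpha_1}_{\beta_2\beta_1}=S_{\alpha_1\alpha_2}^{\beta_1\beta_2}$, checks crossing by the same three $\sigma_i$ identities you list, and for unitarity, hermitian analyticity and the Yang--Baxter equation simply refers to \cite{ZamolodchikovZamolodchikov:1979, Schutzenhofer:2011}. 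Your route via the spectral decomposition $S_{\sigma,N}=s_0P_0+s_sP_s+s_aP_a$ into $O(N)$-irreducibles is more systematic: gauge invariance and TCP become structural, unitarity reduces cleanly to the three scalar conditions $|s_i(\te)|=1$, and you actually sketch how to verify these via Euler reflection rather than deferring to the literature. Two small points: $Q$ has \emph{two} numerator Gamma factors, not three; and your deduction of hermitian analyticity from unitarity together with $\overline{S_{\sigma,N}(\te)}=S_{\sigma,N}(-\te)$ tacitly uses that $S_{\sigma,N}(\te)$ is a symmetric matrix (so that $S^*=\overline{S}$), which is true here since $I,T,E$ are each invariant under $(\alpha\beta;\gamma\delta)\mapsto(\gamma\delta;\alpha\beta)$, but deserves a word.
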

\begin{proof}
	As the calculations necessary here can mostly be found in the literature -- see for example \cite{ZamolodchikovZamolodchikov:1979}, and \cite{Schutzenhofer:2011} for a more detailed account --  we will be brief about the proof. To begin with, we note that the function $Q$ is continuous and bounded on $\overline{\Strip(0,\pi)}$, and analytic in the interior of this strip. The analyticity can be checked by verifying that all poles of the Gamma function lie outside the strip, and the boundedness can be established by standard estimates on the Gamma functions \cite{Schutzenhofer:2011}. As $\Gamma$ has a simple pole at the origin, $\te\mapsto Q(\te)$ has a simple zero at $\te=0$. Hence the poles in the fractions appearing in the definitions of $\sigma_1$ and $\sigma_3$ are compensated by zeros, and we conclude that $\sigma_1$, $\sigma_2$, and $\sigma_3$ are analytic in $\Strip(0,\pi)$, and continuous and bounded on the closure of this strip. This implies that $S_{\sigma,N}$ has the analytic properties required in
Definition~\ref{definition:SMatrix}.

	The verification of unitarity \refitem{item:Unitarity}, hermitian analyticity \refitem{item:HermitianAnalyticity} and the Yang-Baxter equation \refitem{item:YangBaxterEqn} can be found in \cite{ZamolodchikovZamolodchikov:1979, Schutzenhofer:2011}. The TCP invariance \refitem{item:TCPInvariance} of $S_{\sigma,N}$ holds because $\overline{\alpha}=\alpha$ for all $\alpha\in\{1,...,N\}$ and $S_{\sigma,N}$ \eqref{eq:SSigmaModel} has the two symmetries $S_{\sigma,N}(\te)^{\alpha_1\alpha_2}_{\beta_1\beta_2}=S_{\sigma,N}(\te)^{\alpha_2\alpha_1}_{\beta_2\beta_1}=S_{\sigma,N}(\te)_{\alpha_1\alpha_2}^{\beta_1\beta_2}$ (corresponding to invariance under parity and time reversal).

	Concerning crossing symmetry, we note that $\sigma_2(i\pi-\te)=\sigma_2(\te)$, $\sigma_1(i\pi-\te)=\sigma_3(\te)$, and $\sigma_3(i\pi-\te)=\sigma_1(\te)$. Hence
	\begin{align*}
		S_{\sigma,N}(i\pi-\te)^{\alpha_1\alpha_2}_{\beta_1\beta_2}
		&=
		\sigma_3(\theta)\delta^{\alpha_1\alpha_2}\delta^{\beta_1\beta_2}
		+
		\sigma_2(\theta)\delta^{\alpha_1\beta_2}\delta^{\alpha_2\beta_1}
		+
		\sigma_1(\theta)\delta^{\alpha_1\beta_1}\delta^{\alpha_2\beta_2}
		=
		S_{\sigma,N}(\te)^{\beta_1\alpha_1}_{\beta_2\alpha_2}
		\,,
	\end{align*}
	i.e., $S_{\sigma,N}$ is crossing symmetric.

	Finally, the mass condition \refitem{item:TranslationInvariance} is trivially satisfied here since only a single mass value appears in the spectrum. By straightforward computation, one also checks that each of the three terms in \eqref{eq:SSigmaModel} is $O(N)$-symmetric in the sense that it commutes with $M\ot M$ for any $M\in O(N)$. Hence also property \refitem{item:GaugeInvariance} holds, and the proof is finished.
\end{proof}

\section{Conclusions}\label{section:conclusions}

Whereas the models treated in \cite{Lechner:2008} were restricted to just one species of neutral particles, the extension carried out here shows that the presented method is also capable of realizing integrable models with any number of particle species, transforming under an arbitrary global gauge group. A particularly interesting class of models which is now accessible by operator-algebraic methods are the nonlinear $O(N)$ $\sigma$-models, which share some features with non-Abelian gauge theories in four dimensions. In view of the thorough analysis these models have seen in other approaches, there seems to be no real doubt that these models do indeed exist as well-defined quantum field theories. By defining these models via their scattering matrix, also a hard existence proof is now within reach in the approach taken here: All that remains to do is to verify the modular nuclearity condition for the $S$-matrix \eqref{eq:SSigmaModel}. This analysis requires quite some technical work and will be presented elsewhere \cite{Alazzawi:2013}.

But already at the present stage good evidence exists which indicates that this condition is likely to hold. The point is that the $\sigma$-model $S$-matrix \eqref{eq:SSigmaModel} does not only comply with Definition~\ref{definition:SMatrix}, but in fact satisfies somewhat stronger regularity properties.
Namely, given any $\eps>0$, the $S$-matrix $\te\mapsto S_{\sigma,N}(\te)$ extends to a bounded and analytic function on an extended strip
$\Strip(-\frac{2\pi}{N-2}+\eps,\pi+\frac{2\pi}{N-2}-\eps)\supset\Strip(0,\pi)$, properly containing the physical region. Furthermore, at $\te=0$ one finds $S_{\sigma,N}(0)=-1$. In the scalar case, the existence of such
a bounded analytic extension on the one hand, leading to sharp Hardy norm estimates, and the value $-1$ of the S-matrix at $\te=0$ on the other hand, improving the nuclearity estimates via the Pauli principle \cite{Lechner:2005}, were essential for establishing the modular nuclearity condition, and thus the existence of local field operators. These mechanisms can probably also be used in the case of the $O(N)$ $\sigma$-models.

It has to be mentioned that a rigorous comparison of models constructed via different techniques like for example continuum limits of lattice theories and inverse scattering theory, respectively, is not straightforward because the quantities that are explicitly accessible depend on the chosen approach. As explained earlier, we take the point of view that the two-particle S-matrix is a good choice for defining the interaction in the case of integrable models. An identification of the models presented here, for example starting from the $O(N)$ $\sigma$-model S-matrix, with $\sigma$-models defined by, say, a Lagrangian and perturbative quantization and renormalization, would best proceed by proving that the solution of the inverse scattering problem does not only exist in this case, but is also unique. Whereas such a proof is currently unmanageable for general quantum field theories, it seems well within reach in the realm of integrable quantum field theories with factorizing S-matrices \cite{Alazzawi:2013}.

One of the most fascinating aspects of these models is their asymptotic freedom, reminiscent of QCD. This property, although often taken for granted, has not been rigorously proven up to now, see \cite{Seiler:2003} for a detailed discussion. Also in the operator-algebraic approach taken here, a proof of asymptotic freedom would require a deeper analysis. However, the basic tools for such an investigation are in place: As in the scaling limits for scalar models \cite{BostelmannLechnerMorsella:2011}, a short distance limit of the $O(N)$ $\sigma$-models decomposes into two chiral massless theories which still contain the $S$-matrix \eqref{eq:SSigmaModel}. Showing that these chiral nets are isomorphic to free field nets would then amount to a proof of asymptotic freedom. We hope to come back to these questions in a future work.


\section*{Acknowledgements} This project has been supported by the FWF project P22929--N16 ``Deformations of quantum field theories''. GL would like to thank H.~Grosse for interesting discussions about $O(N)$ $\sigma$-models during a stay at the Erwin-Schr\"odinger Institute ESI in 2007, and the ESI for its hospitality. Further helpful discussions with D.~Buchholz and W.~Dybalski regarding scattering theory are also gratefully acknowledged.



\end{document}